\providecommand{\tabularnewline}{\\}
 \definecolor{BLACK}{gray}{0}
 \definecolor{WHITE}{gray}{1}
 \definecolor{RED}{rgb}{1,0,0}
 \definecolor{GREEN}{rgb}{0,1,0}
 \definecolor{BLUE}{rgb}{0,0,1}
 \definecolor{CYAN}{cmyk}{1,0,0,0}
 \definecolor{MAGENTA}{cmyk}{0,1,0,0}
 \definecolor{YELLOW}{cmyk}{0,0,1,0}
\theoremstyle{plain}
\newtheorem{thm}{\protect\theoremname}
\newenvironment{proof}[1][\protect\proofname]{\par
\normalfont\topsep6\p@\@plus6\p@\relax
\trivlist
\itemindent\parindent
\item[\hskip\labelsep
\scshape
#1]\ignorespaces
}{%
\endtrivlist\@endpefalse
}
\providecommand{\proofname}{Proof}
\theoremstyle{plain}
\newtheorem{conjecture}[thm]{\protect\conjecturename}
\definecolor{myurlcolor}{rgb}{0,0,0.7}
\definecolor{myrefcolor}{rgb}{0.8,0,0}
\newcommand{\proj}[1]{\ket{#1}\!\bra{#1}}
\providecommand{\conjecturename}{Conjecture}
\providecommand{\theoremname}{Theorem}
\begin{document}

\title{Progress towards a unified approach to entanglement distribution}

\author{Alexander Streltsov}

\email{alexander.streltsov@icfo.es}

\selectlanguage{american}%

\affiliation{ICFO -- Institut de Ci\`encies Fot\`oniques, Av. C.F. Gauss, 3,  E-08860 Castelldefels, Spain}

\author{Remigiusz Augusiak}

\affiliation{ICFO -- Institut de Ci\`encies Fot\`oniques, Av. C.F. Gauss, 3,  E-08860 Castelldefels, Spain}

\author{Maciej Demianowicz}

\affiliation{ICFO -- Institut de Ci\`encies Fot\`oniques, Av. C.F. Gauss, 3,  E-08860 Castelldefels, Spain}

\author{Maciej Lewenstein}

\affiliation{ICFO -- Institut de Ci\`encies Fot\`oniques, Av. C.F. Gauss, 3,  E-08860 Castelldefels, Spain}

\affiliation{ICREA -- Instituci\'o Catalana de Recerca i Estudis Avan\c{c}ats, Lluis Companys 23, E-08010 Barcelona, Spain}
\begin{abstract}
Entanglement distribution is key to the success of secure communication
schemes based on quantum mechanics, and there is a strong need for
an ultimate architecture able to overcome the limitations of recent
proposals such as those based on entanglement percolation or quantum
repeaters. In this work we provide broad theoretical background for
the development of such technologies. In particular, we investigate
the question of whether entanglement distribution is more efficient
if some amount of entanglement -- or some amount of correlations in
general -- is available prior to the transmission stage of the protocol.
We show that in the presence of noise the answer to this question
strongly depends on the type of noise and on the way how entanglement
is quantified. On the one hand, subadditive entanglement measures
do not show advantage of preshared correlations if entanglement is
established via combinations of single-qubit Pauli channels. On the
other hand, based on the superadditivity conjecture of distillable
entanglement, we provide evidence that this phenomenon occurs for
this measure. These results strongly suggest that sending one half
of some pure entangled state down a noisy channel is the best strategy
for any subadditive entanglement quantifier, thus paving the way to
a unified approach for entanglement distribution which does not depend
on the nature of noise. We also provide general bounds for entanglement
distribution involving quantum discord, and present a counter-intuitive
phenomenon of the advantage of arbitrarily little entangled states
over maximally entangled ones, which may also occur for quantum channels
relevant in experiments.
\end{abstract}

\pacs{03.67.Hk, 03.65.Ud, 03.67.Mn}

\maketitle

\section{Introduction}

Considered as a curiosity in the early days of quantum theory \cite{Einstein1935},
entanglement has been now recognized as the essential ingredient for
a growing number of applications in quantum technologies \cite{Nielsen2000,Horodecki2009}.
Among them we find for example the celebrated quantum cryptography
\cite{Ekert1991} allowing for a provably secure communication between
distant parties, and quantum teleportation \cite{Bennett1993} which
offers the possibility of an intact transmission of a state of a particle
over an arbitrarily long distance using preshared entanglement and
classical communication. Entanglement is also necessary for quantum
nonlocality, which is an even stronger resource for certain information--processing
tasks, including the above mentioned secure key distribution \cite{Ekert1991,Barrett2005,Acin2007}
and certified quantum randomness generation \cite{Pironio2010,Colbeck2009,Colbeck2011}.

A common assumption behind entanglement-based protocols is that long-distance
or at least medium-distance entanglement is available beforehand.
Several remedies against this drawback have been recently proposed
with the most promising one being based on quantum repeaters \cite{Briegel1998}
and entanglement distillation \cite{Bennett1996a}. However, the necessity
of powerful quantum memories appears as the main limiting factor in
this proposal (cf. \cite{Sangouard2011}). Another method is based
on entanglement percolation \cite{Acin2007a}, but it also suffers
problems when considered in realistic situations in the presence of
noise and decoherence \cite{Perseguers2013}.

The aim of the present work is to explore different realistic scenarios
in which the long-distance entanglement can be distributed. The general
framework for such a task we adopt here is the following (see Fig.
\ref{fig:distribution}). Two parties, Alice and Bob, initially share
a three-particle quantum system. Two of the particles are with Alice,
and the remaining one is in Bob's hands. In the most general situation
we allow Alice and Bob to share some correlations established before
the beginning of the protocol. The distribution of entanglement is
then achieved with the aid of a quantum channel which is used to transmit
one of Alice's particles to Bob. 
\begin{figure}
\includegraphics[width=1\columnwidth]{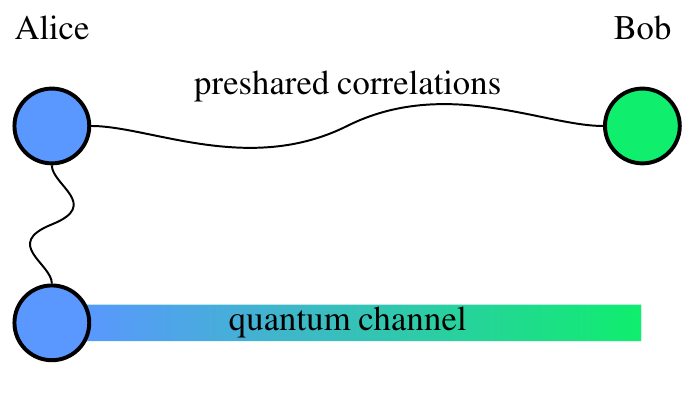} \caption{\label{fig:distribution}General framework for entanglement distribution.
Alice is initially in possession of two particles, while one particle
is in Bob's hands. Alice and Bob further have access to preshared
correlations and an additional -- possibly noisy -- quantum channel
which is used for entanglement distribution.}
\end{figure}

A remarkable result with respect to such general entanglement distribution
protocols has been obtained in \cite{Cubitt2003}. There, it was shown
that the process is even possible without sending entanglement directly:
for successful entanglement distribution the exchanged particle does
not need to be entangled with the rest of the system. This phenomenon
has been termed ``entanglement distribution with separable states'',
and its experimental verification has also been reported recently
\cite{Fedrizzi2013,Vollmer2013,Peuntinger2013,Silberhorn2013}. These
results suggest that such a distribution procedure may be advantageous
in the presence of noise: it could be possible to surpass the fragileness
of entanglement by sending a separable particle.

Despite considerable attempts to understand this phenomenon \cite{Mifmmodeheckslsesita2008,Streltsov2012,Chuan2012,Kay2012,Mifmmodeheckslsesita2013,Streltsov2014},
one of the most important questions remains unresolved: \textit{Can
noisy entanglement distribution with separable states provide an advantage
when compared to sending one half of the maximally entangled state
through the same noisy channel?} Note that the answer for this question
has also a direct importance for the theory and practice of quantum
repeaters and quantum percolation were the intermediate-distance entanglement
between the involved nodes must be established in some way. In this
work we attack this problem by focusing on the following closely related
questions: 
\begin{itemize}
\item Given a noisy quantum channel, what is the maximal amount of entanglement
that can be distributed with and without preshared correlations? 
\item Are preshared correlations helpful for entanglement distribution via
a given quantum channel? 
\end{itemize}
Note that a negative answer to the second question also implies that
entanglement distribution with separable states is not the best strategy
in this situation.

As our study reveals, the answers to these questions depend on the
way entanglement is quantified. In particular, we show that if the
entanglement quantifier is subadditive (that is, its value for a tensor
product of any two states is not greater than the sum of the values
for the individual states) preshared correlations provide no advantage
for single-qubit Pauli channels or any tensor product thereof. In
this situation the best distribution strategy is to send one half
of the maximally entangled state down the noisy channel. However,
not all entanglement quantifiers are subadditive. In particular, it
is conjectured that the distillable entanglement is superadditive
\cite{Shor2001}. Assuming this conjecture holds true, we show that
preshared correlations can indeed provide advantage for the distribution
of distillable entanglement. Another surprising result is obtained
for the logarithmic negativity: for this entanglement measure states
with arbitrarily little entanglement can show better performance for
entanglement distribution when compared to maximally entangled states.
We further present bounds for noisy entanglement distribution given
by quantum discord \cite{Modi2012,Streltsov2014a}, thus significantly
extending the results provided in \cite{Streltsov2012,Chuan2012}
to the noisy scenario.

Moreover, the results presented in this work strongly suggest that
a unified approach to entanglement distribution is indeed possible.
In particular, based on our findings it is very reasonable to assume
that preshared correlations do not provide advantage for any subadditive
entanglement quantifier, regardless of the type of noisy channel used
for the distribution. If this assumption is correct, sending one half
of some pure entangled state down a noisy channel will be the best
strategy in this very general scenario. However, we also show that
maximally entangled states are not necessarily optimal for this process.

This paper is organized as follows: in Section \ref{sec:Noiseless}
we study noiseless entanglement distribution, while the scenario involving
noise is considered in Section \ref{sec:Noisy}. In Section \ref{sec:Optimal}
we investigate optimal entanglement distribution without preshared
correlations, i.e., we consider the maximal amount of entanglement
that can be distributed via a given noisy channel if Alice and Bob
do not share any correlations initially. Finally, the possible advantage
of preshared correlations for noisy entanglement distribution is discussed
in Section \ref{sec:Advantage}.

\section{\label{sec:Noiseless}Noiseless entanglement distribution}

Starting point of this section is the general scenario for entanglement
distribution considered in \cite{Streltsov2012,Chuan2012}, see also
\cite{Streltsov2014a} for a detailed discussion. In particular, we
assume that two parties, Alice and Bob, have access to a general tripartite
quantum state $\rho=\rho^{ABC}$. We further assume -- without loss
of generality -- that the entanglement distribution is realized by
sending the particle $C$ from Alice to Bob, and that during the entire
process the particles $A$ and $B$ are in possession of Alice and
Bob respectively. If the quantum channel used for the transmission
of the particle $C$ is noiseless, the amount of entanglement distributed
in this process is quantified via the difference $E^{A|BC}(\rho)-E^{AC|B}(\rho)$
between the final amount of entanglement $E^{A|BC}(\rho)$ and the
initial amount of entanglement $E^{AC|B}(\rho)$.

As it was shown in \cite{Cubitt2003}, entanglement distribution is
also possible by sending a particle which is not entangled with the
rest of the system, i.e., there exist states $\rho=\rho^{ABC}$ such
that $E^{C|AB}(\rho)=0$ and, at the same time, $E^{A|BC}(\rho)-E^{AC|B}(\rho)>0$.
This finding has triggered a debate about the type of correlations
which are responsible for entanglement distribution. An important
result in this context was provided in \cite{Streltsov2012,Chuan2012}:
the amount of distributed entanglement cannot exceed the amount of
quantum discord $\Delta^{C|AB}$ between the exchange particle $C$
and the remaining system $AB$: 
\begin{equation}
\Delta^{C|AB}(\rho)\geq E^{A|BC}(\rho)-E^{AC|B}(\rho).\label{eq:main}
\end{equation}
At this point, it is also important to notice that in general quantum
discord does not vanish on separable states. This inequality was shown
to hold for all distance-based quantifiers of entanglement and discord
\cite{Streltsov2012}: 
\begin{align}
E^{X|Y}(\rho^{XY}) & =\min_{\sigma^{XY}\in{\cal S}}D(\rho^{XY},\sigma^{XY}),\label{eq:entanglement}\\
\Delta^{X|Y}(\rho^{XY}) & =\min_{\{\Pi_{i}^{X}\}}D(\rho^{XY},\sum_{i}\Pi_{i}^{X}\rho^{XY}\Pi_{i}^{X}).\label{eq:discord}
\end{align}
Here, ${\cal S}$ is the set of bipartite separable states, $\{\Pi_{i}^{X}\}$
is a local von Neumann measurement on the subsystem $X$, and $D$
can be any general distance which satisfies the following two properties
\cite{Streltsov2012}: 
\begin{itemize}
\item $D$ does not increase under quantum operations: 
\begin{equation}
D(\Lambda[\rho],\Lambda[\sigma])\leq D(\rho,\sigma)\label{eq:contractive}
\end{equation}
for any quantum operation $\Lambda$ and any pair of quantum states
$\rho$ and $\sigma$, 
\item $D$ satisfies the triangle inequality: 
\begin{equation}
D(\rho,\sigma)\leq D(\rho,\tau)+D(\tau,\sigma)\label{eq:triangle}
\end{equation}
for any three quantum states $\rho$, $\sigma$ and $\tau$. 
\end{itemize}
As it was further shown in \cite{Streltsov2012,Chuan2012}, the results
presented above also hold for the quantum relative entropy $S(\rho||\sigma)=\mathrm{Tr}[\rho\log\rho]-\mathrm{Tr}[\rho\log\sigma]$,
despite the fact that the relative entropy in general does not satisfy
the triangle inequality. The corresponding quantifiers of entanglement
and discord in this case are known as the relative entropy of entanglement
$E_{R}$ and the relative entropy of discord $\Delta_{R}$: 
\begin{align}
E_{R}^{X|Y}(\rho^{XY}) & =\underset{\sigma^{XY}\in{\cal S}}{\min}S(\rho^{XY}||\sigma^{XY}),\label{eq:REE}\\
\Delta_{R}^{X|Y}(\rho^{XY}) & =\underset{\{\Pi_{i}^{X}\}}{\min}S(\rho^{XY}||\sum_{i}\Pi_{i}^{X}\rho^{XY}\Pi_{i}^{X}).\label{eq:RED}
\end{align}

The relative entropy of entanglement $E_{R}$ was originally introduced
in \cite{Vedral1997,Vedral1998}. By its relation to the relative
entropy \cite{Schumacher,Vedral2002} it plays a fundamental role
in quantum information theory. $E_{R}$ is known to be an upper bound
on the distillable entanglement $E_{d}$ \cite{Rains1999,Horodecki2000}
and a lower bound on the entanglement of formation $E_{f}$ \cite{Vedral1998}:
\begin{equation}
E_{d}\leq E_{R}\leq E_{f}.\label{eq:bound}
\end{equation}
The distillable entanglement $E_{d}$ quantifies the maximal number
of singlets that can be asymptotically obtained per copy of the given
state via local operations and classical communication (LOCC) \cite{Bennett1996a}.
The entanglement of formation $E_{f}$ is defined as \cite{Bennett1996}
\begin{equation}
E_{f}(\rho^{XY})=\min\sum_{i}p_{i}E(\ket{\psi_{i}}^{XY}),
\end{equation}
where the minimum is taken over all pure-state decompositions $\{p_{i},\ket{\psi_{i}}^{XY}\}$
of the state $\rho^{XY}$, i.e., $\rho^{XY}=\sum_{i}p_{i}\ket{\psi}\bra{\psi}^{XY}$,
and $E(\ket{\psi}^{XY})=S(\rho^{X})$ is the von Neumann entropy of
the reduced state. 

The relative entropy of discord $\Delta_{R}$ was originally introduced
in \cite{Horodecki2005}, where it was called ``one-way information
deficit'' %
\footnote{Note that the term ``relative entropy of discord'' first appeared
in \cite{Modi2010}, where it was used for the minimal relative entropy
to the set of fully classical states. Here we will reserve the name
``relative entropy of discord'' exclusively for the quantity given
in Eq. (\ref{eq:RED}).%
}. It quantifies the amount of information which cannot be localized
by one-way classical communication between two parties.

\subsection{Relation to distillable entanglement and entanglement cost}

Equipped with these tools we are now in position to present the first
result of this paper. In particular, we will provide a close connection
between the relative entropy of discord $\Delta_{R}$, the distillable
entanglement $E_{d}$, and the entanglement cost $E_{c}$. The latter
is defined as the minimal number of singlets per copy required for
the asymptotic creation of a bipartite quantum state via LOCC \cite{Horodecki2009},
and can also be written as the regularized entanglement of formation
\cite{Hayden2001}: 
\begin{equation}
E_{c}(\rho)=\underset{n\rightarrow\infty}{\lim}\frac{1}{n}E_{f}(\rho^{\otimes n}).\label{eq:Ec}
\end{equation}
 The aforementioned relation between $\Delta_{R}$, $E_{d}$, and
$E_{c}$ is provided in the following theorem. 
\begin{thm}
\label{thm:1}Given a tripartite state $\rho=\rho^{ABC}$, the following
inequality holds: 
\begin{equation}
\Delta_{R}^{C|AB}(\rho)\geq E_{d}^{A|BC}(\rho)-E_{c}^{AC|B}(\rho).\label{eq:main-1}
\end{equation}
\end{thm}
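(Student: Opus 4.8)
The plan is to deduce the bound from a regularized, many-copy version of the discord inequality rather than from a single copy, which by itself is too weak. The starting point is the relative-entropy instance of the discord bound \eqref{eq:main} established in the cited works, namely $\Delta_R^{C|AB}(\sigma)\ge E_R^{A|BC}(\sigma)-E_R^{AC|B}(\sigma)$ for an arbitrary tripartite state $\sigma$. First I would apply this to the tensor power $\sigma=\rho^{\otimes n}$, reading the labels $A,B,C$ as the grouped systems $A^{\otimes n},B^{\otimes n},C^{\otimes n}$, to obtain
\[
\Delta_R^{C|AB}(\rho^{\otimes n})\ge E_R^{A|BC}(\rho^{\otimes n})-E_R^{AC|B}(\rho^{\otimes n}).
\]

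Next I would estimate the three terms so that a single-copy statement re-emerges after dividing by $n$. For the left-hand side I would use subadditivity of the relative entropy of discord, $\Delta_R^{C|AB}(\rho^{\otimes n})\le n\,\Delta_R^{C|AB}(\rho)$, which follows by choosing on the $n$ copies the product of the single-copy optimal measurements and invoking additivity of the relative entropy under tensor products. For the first term on the right I would combine the lower bound $E_d\le E_R$ from \eqref{eq:bound} with the elementary superadditivity of distillable entanglement under tensoring, $E_d^{A|BC}(\rho^{\otimes n})\ge n\,E_d^{A|BC}(\rho)$, obtained by distilling the $n$ copies independently. For the subtracted term I would use the upper bound $E_R\le E_f$, again from \eqref{eq:bound}. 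Chaining these three estimates yields
\[
n\,\Delta_R^{C|AB}(\rho)\ge n\,E_d^{A|BC}(\rho)-E_f^{AC|B}(\rho^{\otimes n}).
\]

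Finally I would divide by $n$ and let $n\to\infty$. Every term but $E_f^{AC|B}(\rho^{\otimes n})$ already carries an explicit factor $n$, and by the characterization \eqref{eq:Ec} of the entanglement cost as the regularized entanglement of formation one has $\frac{1}{n} E_f^{AC|B}(\rho^{\otimes n})\to E_c^{AC|B}(\rho)$. Passing to the limit then produces exactly \eqref{eq:main-1}.

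The genuinely important step is the regularization idea, not any individual computation. A one-copy argument would deliver only $\Delta_R^{C|AB}(\rho)\ge E_d^{A|BC}(\rho)-E_f^{AC|B}(\rho)$, which is weaker than the claim, since $E_c\le E_f$ makes the subtracted term point the wrong way. The resolution is that the subadditivity of $\Delta_R$ and the superadditivity of $E_d$ make the two outer terms scale as $n$ times their single-copy values, so that dividing by $n$ reproduces them, whereas $E_f$ is replaced by the strictly more favourable $E_c$ only after regularization. The technical care needed is therefore limited to fixing the correct inequality directions in the subadditivity of $\Delta_R$ and the superadditivity of $E_d$ so that the three bounds assemble \emph{coherently}.
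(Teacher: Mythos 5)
Your proof is correct and follows essentially the same route as the paper's: both apply the relative-entropy discord inequality to $\rho^{\otimes n}$, sandwich $E_R$ between $E_d$ and $E_f$ via Eq.~(\ref{eq:bound}), and regularize so that $E_f$ becomes $E_c$ while the discord and distillable-entanglement terms survive division by $n$. The only difference is presentational --- you spell out the justifications (product measurements giving $\Delta_R^{C|AB}(\rho^{\otimes n})\leq n\,\Delta_R^{C|AB}(\rho)$, and blockwise distillation giving $E_d^{A|BC}(\rho^{\otimes n})\geq n\,E_d^{A|BC}(\rho)$) that the paper invokes implicitly as ``$\Delta_R$ does not increase under regularization'' and ``$E_d$ does not change under regularization.''
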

\begin{proof}
This inequality can be proven by noticing that the inequality (\ref{eq:main})
also holds for the regularized relative entropy of entanglement and
discord: 
\begin{equation}
\underset{n\rightarrow\infty}{\lim}\frac{\Delta_{R}^{C|AB}(\rho^{\otimes n})}{n}\geq\underset{n\rightarrow\infty}{\lim}\frac{E_{R}^{A|BC}(\rho^{\otimes n})}{n}-\underset{n\rightarrow\infty}{\lim}\frac{E_{R}^{AC|B}(\rho^{\otimes n})}{n}.
\end{equation}
By applying Eq. (\ref{eq:bound}) and using the fact that the distillable
entanglement $E_{d}$ does not change under regularization we arrive
at the inequality 
\begin{equation}
\underset{n\rightarrow\infty}{\lim}\frac{\Delta_{R}^{C|AB}(\rho^{\otimes n})}{n}\geq E_{d}^{A|BC}(\rho)-\underset{n\rightarrow\infty}{\lim}\frac{E_{f}^{AC|B}(\rho^{\otimes n})}{n}.
\end{equation}
In the next step we recall that the entanglement cost is equal to
the regularized entanglement of formation, see Eq. (\ref{eq:Ec}),
and thus 
\begin{equation}
\underset{n\rightarrow\infty}{\lim}\frac{\Delta_{R}^{C|AB}(\rho^{\otimes n})}{n}\geq E_{d}^{A|BC}(\rho)-E_{c}^{AC|B}(\rho).
\end{equation}
Finally, the desired inequality (\ref{eq:main-1}) follows by observing
that the relative entropy of discord does not increase under regularization:
$\Delta_{R}(\rho)\geq\underset{n\rightarrow\infty}{\lim}\Delta_{R}(\rho^{\otimes n})/n$. 
\end{proof}
Notice that Eq. (\ref{eq:main-1}) has a clear operational interpretation:
the relative entropy of discord is an upper bound on the number of
singlets gained in the process of entanglement distribution in the
asymptotic limit. This is because $E_{d}^{A|BC}(\rho)$ quantifies
the number of singlets Alice and Bob can distill \textit{after} performing
the entanglement distribution, while $E_{c}^{AC|B}(\rho)$ quantifies
the amount of singlets that Alice and Bob need to create the state
$\rho=\rho^{ABC}$ \textit{before} the entanglement distribution has
been performed, both in the asymptotic limit. Moreover, as already
mentioned in the proof of Theorem \ref{thm:1}, this statement is
also true for the regularized relative entropy of discord $\underset{n\rightarrow\infty}{\lim}\Delta_{R}(\rho^{\otimes n})/n$.

\subsection{Relation to measures of NPT entanglement and distillability}

The results presented above demonstrate that the relation between
entanglement and discord in Eq. (\ref{eq:main}) is more general than
anticipated by the original approach \cite{Streltsov2012,Chuan2012}.
In the following we will go one step further by extending these results
to general measures of NPT (nonpositive partial transpose) entanglement.
In particular, we consider entanglement quantifiers of the form \cite{Horodecki2009}
\begin{equation}
E_{\mathrm{PPT}}^{X|Y}(\rho^{XY})=\min_{\sigma^{XY}\in\mathrm{PPT}}D(\rho^{XY},\sigma^{XY}),
\end{equation}
where $\mathrm{PPT}$ is the set of states having positive partial
transpose, and the distance $D$ satisfies Eqs. (\ref{eq:contractive})
and (\ref{eq:triangle}). The amount of quantum discord is defined
in the same way as in Eq. (\ref{eq:discord}): 
\begin{equation}
\Delta^{X|Y}(\rho^{XY})=\min_{\{\Pi_{i}^{X}\}}D(\rho^{XY},\sum_{i}\Pi_{i}^{X}\rho^{XY}\Pi_{i}^{X}).\label{eq:discord-1}
\end{equation}
The following theorem shows that the inequality (\ref{eq:main}) also
applies to these measures of NPT entanglement. 
\begin{thm}
\label{thm:2}Given a tripartite state $\rho=\rho^{ABC}$, the following
inequality holds: 
\begin{equation}
\Delta^{C|AB}(\rho)\geq E_{\mathrm{PPT}}^{A|BC}(\rho)-E_{\mathrm{PPT}}^{AC|B}(\rho).\label{eq:PPT}
\end{equation}
\end{thm}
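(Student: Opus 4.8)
The plan is to follow the same route as the proof of the distance-based inequality~(\ref{eq:main}) from \cite{Streltsov2012}, simply replacing the set of separable states by the set of PPT states throughout. The only genuinely new ingredient needed is a closure property: a local rank-one projective measurement on $C$ must map a state that is PPT across $AC|B$ into a state that is PPT across $A|BC$. Once this is in hand, the triangle inequality~(\ref{eq:triangle}) and the contractivity~(\ref{eq:contractive}) of $D$ do all the remaining work, exactly as in the separable case.

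Concretely, I would first fix the two minimizers. Let $\{\Pi_i^C\}$ be the local von Neumann measurement attaining the discord, so that $\Delta^{C|AB}(\rho)=D(\rho,\rho')$ with $\rho'=\sum_i\Pi_i^C\rho\,\Pi_i^C$, and let $\sigma$ be a state that is PPT across $AC|B$ and attains $E_{\mathrm{PPT}}^{AC|B}(\rho)=D(\rho,\sigma)$. Applying the same measurement channel $\Lambda[\cdot]=\sum_i\Pi_i^C(\cdot)\Pi_i^C$ to $\sigma$ yields $\sigma'=\sum_i\Pi_i^C\sigma\,\Pi_i^C=\Lambda[\sigma]$. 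Granting for the moment that $\sigma'$ is PPT across $A|BC$, the chain
\begin{equation}
E_{\mathrm{PPT}}^{A|BC}(\rho)\leq D(\rho,\sigma')\leq D(\rho,\rho')+D(\rho',\sigma')\leq D(\rho,\rho')+D(\rho,\sigma)
\end{equation}
follows, in order, from: the definition of $E_{\mathrm{PPT}}^{A|BC}$ applied to the feasible state $\sigma'$; the triangle inequality with $\tau=\rho'$; and contractivity applied to $D(\Lambda[\rho],\Lambda[\sigma])$ using $\rho'=\Lambda[\rho]$ and $\sigma'=\Lambda[\sigma]$. Identifying $D(\rho,\rho')=\Delta^{C|AB}(\rho)$ and $D(\rho,\sigma)=E_{\mathrm{PPT}}^{AC|B}(\rho)$ and rearranging then gives~(\ref{eq:PPT}).

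It remains to establish the closure property, which I expect to be the main obstacle. Taking the von Neumann measurement to be rank one, $\Pi_i^C=\proj{i}$ for an orthonormal basis $\{\ket{i}\}$ of $C$, the post-measurement state becomes block diagonal in $C$, namely $\sigma'=\sum_i\sigma_i^{AB}\otimes\proj{i}$ with $\sigma_i^{AB}=\bra{i}_C\sigma\ket{i}_C\geq0$ the (unnormalized) conditional states on $AB$. Because $\sigma$ is PPT across $AC|B$, its partial transpose $\sigma^{T_{B}}$ is positive, and compressing to the $i$-th block gives $(\sigma_i^{AB})^{T_{B}}=\bra{i}_C\sigma^{T_{B}}\ket{i}_C\geq0$; that is, each $\sigma_i^{AB}$ is a PPT state of the bipartite system $AB$. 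Here I would invoke the elementary symmetry of the partial transpose for bipartite states: since $(\sigma_i^{AB})^{T_{A}}$ is the overall transpose of $(\sigma_i^{AB})^{T_{B}}$, and the full transpose preserves positivity, one has $(\sigma_i^{AB})^{T_{B}}\geq0\Leftrightarrow(\sigma_i^{AB})^{T_{A}}\geq0$, so $(\sigma_i^{AB})^{T_{A}}\geq0$ for every $i$. As transposition on $A$ acts blockwise and leaves the $C$ register untouched, $(\sigma')^{T_{A}}=\sum_i(\sigma_i^{AB})^{T_{A}}\otimes\proj{i}\geq0$, proving that $\sigma'$ is PPT across $A|BC$. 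The delicate point is that this step truly relies on the rank-one (basis) character of the measurement: for higher-rank projectors each block would be a nontrivial tripartite state on which the bipartite symmetry of the partial transpose is unavailable, and indeed PPT across one cut need not imply PPT across another. Restricting the discord to rank-one von Neumann measurements, as in the standard definition, is therefore exactly what makes the argument go through.
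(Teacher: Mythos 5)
Your proposal is correct and follows essentially the same route as the paper's own proof: the closest $AC|B$-PPT state $\sigma$ and the discord-minimizing measurement are combined via the triangle inequality and contractivity, with the key lemma being that the measured state $\sigma'=\sum_i\Pi_i^C\sigma\Pi_i^C$ is PPT across $A|BC$. The paper states this last closure property more tersely (classicality on $C$ plus PPT across $AC|B$ implies PPT across $A|BC$), whereas you spell out the blockwise partial-transpose argument explicitly, including the rank-one caveat that the paper leaves implicit in its quantum-classical form $\sum_i p_i\sigma_i^{AB}\otimes\proj{i}^C$.
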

\begin{proof}
The proof goes along the lines of the one of Eq. (\ref{eq:main}),
first presented in \cite{Streltsov2012}. We start by introducing
the state $\sigma=\sigma^{ABC}$ which is PPT with respect to the
bipartition $AC|B$, and, moreover, we assume that it is the closest
PPT state to $\rho$: $E_{\mathrm{PPT}}^{AC|B}(\rho)=D(\rho,\sigma)$.
We then define the states 
\begin{equation}
\rho'=\sum_{i}\Pi_{i}^{C}\rho\Pi_{i}^{C}\label{rhoprime}
\end{equation}
and 
\begin{equation}
\sigma'=\sum_{i}\Pi_{i}^{C}\sigma\Pi_{i}^{C}\label{sigmaprime}
\end{equation}
to arise from $\rho$ and $\sigma$ via the local von Neumann measurement
on $C$ minimizing the distance between $\rho$ and $\rho'$, i.e.,
$\Delta^{C|AB}(\rho)=D(\rho,\rho')$. Further, we use the fact that
the distance $D$ satisfies the triangle inequality, and thus 
\begin{equation}
D(\rho,\sigma')\leq D(\rho,\rho')+D(\rho',\sigma').\label{eq:triangle-1}
\end{equation}
Recalling that $D$ does not increase under quantum operations it
follows that 
\begin{equation}
D(\rho',\sigma')\leq D(\rho,\sigma),\label{eq:contractive-1}
\end{equation}
and Eq. (\ref{eq:triangle-1}) becomes 
\begin{equation}
D(\rho,\sigma')\leq\Delta^{C|AB}(\rho)+E_{\mathrm{PPT}}^{AC|B}(\rho).\label{eq:trignale-2}
\end{equation}

In the final step we note that the state (\ref{sigmaprime}) is PPT
with respect to all bipartitions, i.e., $E_{\mathrm{PPT}}^{AB|C}(\sigma')=E_{\mathrm{PPT}}^{AC|B}(\sigma')=E_{\mathrm{PPT}}^{A|BC}(\sigma')=0$.
The fact that $E_{\mathrm{PPT}}^{AB|C}(\sigma')=0$ is obvious, since
$\sigma'$ arises by performing a local von Neumann measurement $\{\Pi_{i}^{C}\}$
on the state $\sigma$, and thus has the form of a quantum-classical
state: $\sigma'=\sum_{i}\Pi_{i}^{C}\sigma\Pi_{i}^{C}=\sum_{i}p_{i}\sigma_{i}^{AB}\otimes\ket{i}\bra{i}^{C}$.
Moreover, by the very construction, the state $\sigma$ is PPT with
respect to the bipartition $AC|B$, and so is $\sigma'$, meaning
that $E_{\mathrm{PPT}}^{AC|B}(\sigma')=0$. This, together with the
fact that $\sigma'$ is classical on the subsystem $C$, implies that
it is also PPT with respect to the remaining bipartition $A|BC$:
$E_{\mathrm{PPT}}^{A|BC}(\sigma')=0$. This means that the distance
between $\rho$ and $\sigma'$ is an upper bound on $E_{\mathrm{PPT}}^{A|BC}(\rho)$,
which, when applied in Eq. (\ref{eq:trignale-2}), completes the proof. 
\end{proof}
The above theorem extends the range of applications of Eq. (\ref{eq:main})
to distance-based quantifiers of NPT entanglement. The same arguments
can also be applied to measures of distillability defined as \cite{Horodecki2009}
\begin{equation}
E_{\mathrm{ND}}^{X|Y}(\rho^{XY})=\min_{\sigma^{XY}\in\mathrm{ND}}D(\rho^{XY},\sigma^{XY}).
\end{equation}
Here, $\mathrm{ND}$ is the set of nondistillable states, and, as
before, the distance $D$ satisfies Eqs. (\ref{eq:contractive}) and
(\ref{eq:triangle}). Using the same arguments as in the proof of
Theorem \ref{thm:2} we see that Eq. (\ref{eq:PPT}) generalizes to
these distillability measures: 
\begin{equation}
\Delta^{C|AB}(\rho)\geq E_{\mathrm{ND}}^{A|BC}(\rho)-E_{\mathrm{ND}}^{AC|B}(\rho),\label{eq:ND}
\end{equation}
where the quantum discord $\Delta^{C|AB}$ is defined in the same
way as in Eq. (\ref{eq:discord-1}).

Finally, the above results also hold for measures of NPT entanglement
and distillability based on the relative entropy, although, as mentioned
earlier, the latter does not satisfy the triangle inequality in general.
The fact that Eqs. (\ref{eq:PPT}) and (\ref{eq:ND}) still apply
to these measures can be seen using the same arguments as in the proof
of Theorem \ref{thm:2} by observing that for the relative entropy
the inequality (\ref{eq:triangle-1}) becomes equality \cite{Streltsov2012}.

\subsection{Relation to Schatten norms}

The results presented so far hold for a very general class of quantifiers
for entanglement and discord. In particular, we have seen that Eq.
(\ref{eq:main}) applies for any entanglement measure $E$ which is
defined via the minimal distance to the set of separable, nondistillable,
or PPT states, if the amount of discord $\Delta$ is quantified as
in Eq. (\ref{eq:discord}). The corresponding distance only needs
to satisfy two minimal requirements given in Eqs. (\ref{eq:contractive})
and (\ref{eq:triangle}): it should not increase under quantum operations
and it should satisfy the triangle inequality. On the other hand,
we have also seen that Eq. (\ref{eq:main}) can be still valid even
if the distance violates one of these properties. This was demonstrated
for the relative entropy which can violate the triangle inequality.

In the following we will show that Eq. (\ref{eq:main}) may also hold
for distances violating Eq. (\ref{eq:contractive}), i.e., those that
are not contractive under quantum operations. To this end we will
consider the following distance 
\begin{equation}
D_{p}(\rho,\sigma)=||\rho-\sigma||_{p}
\end{equation}
with $\|\cdot\|_{p}$ being the Schatten $p$-norm of an operator
$M$ defined through 
\begin{equation}
||M||_{p}=\left(\mathrm{Tr}\left[\left(M^{\dagger}M\right)^{p/2}\right]\right)^{1/p}
\end{equation}
with $p\geq1$. Clearly, $D_{1}$ coincides with the trace distance
and thus does not increase under quantum operations \cite{Nielsen2000}.
However, contrary to what had been claimed in \cite{Witte1999}, already
$D_{2}$ (so-called Hilbert-Schmidt distance) can increase under quantum
operations as shown in \cite{Ozaw2000}. The arguments from Ref. \cite{Ozaw2000}
can be further generalized to show this fact for any $p>1$ (see also
\cite{Perez-Garcia2006,Piani2012} for similar considerations). 

Let now $E_{p}$ be defined as 
\begin{equation}
E_{p}^{X|Y}(\rho^{XY})=\min_{\sigma^{XY}\in T}D_{p}(\rho^{XY},\sigma^{XY}),
\end{equation}
with the minimization going over the set $T$, which here might denote
either of the sets: separable, nondistillable, or PPT states. Let
further $\Delta_{p}$ be defined by Eq. (\ref{eq:discord}) with the
distance taken to be $D_{p}$. The following theorem shows that Eq.
(\ref{eq:main}) also holds in this situation. 
\begin{thm}
\label{thm:3}Given a tripartite state $\rho=\rho^{ABC}$, the following
inequality holds: 
\begin{equation}
\Delta_{p}^{C|AB}(\rho)\geq E_{p}^{A|BC}(\rho)-E_{p}^{AC|B}(\rho).
\end{equation}
\end{thm}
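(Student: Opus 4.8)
The plan is to reproduce the proof of Theorem~\ref{thm:2} step by step, isolating the single place where contractivity of the distance was used and replacing it by a property special to pinching maps. First I would take $\sigma=\sigma^{ABC}$ to be the closest state of the set $T$ to $\rho$ across the bipartition $AC|B$, so that $E_p^{AC|B}(\rho)=D_p(\rho,\sigma)$, and let $\{\Pi_i^C\}$ be the local von Neumann measurement achieving $\Delta_p^{C|AB}(\rho)=D_p(\rho,\rho')$, with $\rho'$ and $\sigma'$ defined as in Eqs.~(\ref{rhoprime}) and (\ref{sigmaprime}). Since every Schatten $p$-norm with $p\geq1$ is a genuine norm, the triangle inequality~(\ref{eq:triangle-1}) is immediate, and the structural part of the argument carries over verbatim: $\sigma'$ is quantum--classical on $C$ and inherits the defining property of $\sigma$ across $AC|B$, so it belongs to $T$ with respect to all three bipartitions and $D_p(\rho,\sigma')$ upper bounds $E_p^{A|BC}(\rho)$.

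The only ingredient that does not survive is the contractivity inequality~(\ref{eq:contractive-1}), since for $p>1$ the channel $X\mapsto\sum_i\Pi_i^C X\Pi_i^C$ need not be contractive for $D_p$. The key observation I would exploit is that this particular channel is a \emph{pinching}: the operators $\{\Pi_i^C\}$ are mutually orthogonal projectors summing to the identity on $ABC$, so by linearity $\rho'-\sigma'=\sum_i\Pi_i^C(\rho-\sigma)\Pi_i^C$ is the pinching of $M:=\rho-\sigma$. I would then invoke the pinching inequality $\|\sum_i\Pi_i^C M\Pi_i^C\|_p\leq\|M\|_p$, which holds for every unitarily invariant norm. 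Its proof is short: writing $N$ for the number of outcomes and $\omega$ for a primitive $N$th root of unity, the unitaries $U_k=\sum_i\omega^{ik}\Pi_i^C$ satisfy $\frac{1}{N}\sum_k U_k M U_k^\dagger=\sum_i\Pi_i^C M\Pi_i^C$, whence the triangle inequality together with unitary invariance of the Schatten norm gives $\|\sum_i\Pi_i^C M\Pi_i^C\|_p\leq\frac{1}{N}\sum_k\|U_k M U_k^\dagger\|_p=\|M\|_p$. This is exactly $D_p(\rho',\sigma')\leq D_p(\rho,\sigma)$.

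The main obstacle is conceptual rather than computational: one must notice that, even though Schatten distances with $p>1$ fail to be contractive under general quantum operations, the single operation appearing in the proof of Theorem~\ref{thm:2} is a projective pinching, for which contractivity is restored by the averaging-over-unitaries argument above. Once this replacement is made, feeding $D_p(\rho',\sigma')\leq D_p(\rho,\sigma)=E_p^{AC|B}(\rho)$ into~(\ref{eq:triangle-1}) reproduces the bound $D_p(\rho,\sigma')\leq\Delta_p^{C|AB}(\rho)+E_p^{AC|B}(\rho)$ of Eq.~(\ref{eq:trignale-2}), and combining it with $E_p^{A|BC}(\rho)\leq D_p(\rho,\sigma')$ yields the claimed inequality.
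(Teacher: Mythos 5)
Your proof is correct, and its overall skeleton coincides with the paper's: both reduce Theorem \ref{thm:3} to the argument of Theorem \ref{thm:2} and isolate the contractivity step (\ref{eq:contractive-1}) as the only point where $p>1$ causes trouble. The difference lies in the key lemma used to repair that step. The paper observes that the map $X\mapsto\sum_i\Pi_i^C X\Pi_i^C$ is unital and invokes the result of \cite{Perez-Garcia2006} that unital maps never increase Schatten $p$-norms for $p\geq1$; you instead recognize this map as a pinching and prove its contractivity from scratch, writing the pinching as a uniform average of the unitary conjugations $U_k=\sum_i\omega^{ik}\Pi_i^C$ and applying the triangle inequality together with unitary invariance. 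Your route is more elementary and self-contained: it needs only convexity and unitary invariance of the norm, so it actually establishes the required inequality for every unitarily invariant norm, not just the Schatten family. The paper's route outsources the work to a stronger general fact -- contractivity of \emph{all} unital maps, not only random-unitary ones such as pinchings -- whose proof requires interpolation machinery but which would also cover variants where the von Neumann measurement is replaced by a more general unital operation. Either lemma suffices here, since the only channel ever applied simultaneously to $\rho$ and $\sigma$ is the pinching itself; the remainder of your argument (the choice of $\sigma$, the triangle inequality (\ref{eq:triangle-1}), and the membership of $\sigma'$ in $T$ across all three bipartitions) matches the paper's reasoning step for step.
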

\begin{proof}
The proof follows exactly the lines of the proof of Theorem \ref{thm:2}.
The only thing which needs to be proved is the fact that although
$D_{p}$ may increase under general quantum operations, it does not
for those operations that map the states $\rho$ and $\sigma$ to
$\rho'$ and $\sigma'$ in Eqs. (\ref{rhoprime}) and (\ref{sigmaprime}),
respectively. For this purpose, we notice that such mapping is unital,
i.e., $\sum_{i}\Pi_{i}^{C}\openone\Pi_{i}^{C}=\openone,$ where $\openone=\openone^{ABC}$
is the identity operator, and it was shown in Ref. \cite{Perez-Garcia2006}
that no unital map can increase the $p$-norm for any $p\geq1$. This
implies that for $p\geq1$, $D_{p}$ does satisfy Eq. (\ref{eq:contractive-1})
for the states of interest, which completes the proof. 
\end{proof}
While quantifiers of discord based on Schatten norms have been considered
only recently \cite{Dakic2010,Luo2010,Dakic2012,Girolami2012,Hassan2012,Bellomo2012,Piani2012,Gharibian2012,Vinjanampathy2012,Jin2012,Giampaolo2013,Paula2013},
entanglement quantifiers of this type were studied already more than
a decade ago \cite{Witte1999,Ozaw2000,Verstraete2002,Bertlmann2002,Bertlmann2005}.
Despite this fact, it has been an open question if $E_{p}$ is a proper
entanglement measure, i.e., if it is nonincreasing under LOCC for
$p>1$ \cite{Ozaw2000}. In what follows we will put this question
to rest by showing that $E_{p}$ can increase by simply discarding
a part of the system. For this purpose, let $\rho^{AB}$ be a quantum
state such that $E_{p}^{A|B}(\rho^{AB})>0$. Then, consider its extension
to a three-partite state defined as $\rho^{ABC}=\rho^{AB}\otimes\openone^{C}/2$,
where the particle $C$ is a qubit. We will now show that the entanglement
of $\rho^{AB}$ is larger than the entanglement of $\rho^{ABC}$:
\begin{equation}
E_{p}^{A|B}(\rho^{AB})>E_{p}^{A|BC}(\rho^{ABC})\label{eq:HS-3}
\end{equation}
for all $p>1$. To this end, observe that the amount of entanglement
$E_{p}^{A|BC}(\rho^{ABC})$ is bounded from above by the distance
$D_{p}(\rho^{ABC},\sigma^{ABC})$ for $\sigma^{ABC}=\sigma^{AB}\otimes\openone^{C}/2$,
where $\sigma^{AB}$ is the closest separable state to $\rho^{AB}$.
Moreover, notice that the distance between $\rho^{ABC}$ and $\sigma^{ABC}$
can also be expressed as \cite{Paula2013} 
\begin{equation}
D_{p}(\rho^{ABC},\sigma^{ABC})=\left\Vert \frac{\openone^{C}}{2}\right\Vert _{p}D_{p}(\rho^{AB},\sigma^{AB}).
\end{equation}
Recalling that the state $\sigma^{AB}$ was defined to be the closest
separable state to $\rho^{AB}$ and using the fact that $||\openone^{C}/2||_{p}=2^{1/p-1}$,
one obtains 
\begin{equation}
E_{p}^{A|BC}(\rho^{ABC})\leq2^{1/p-1}E_{p}^{A|B}(\rho^{AB}).
\end{equation}
The inequality (\ref{eq:HS-3}) follows by noting that for $p>1$
the exponent $1/p-1$ is negative, and thus $2^{1/p-1}<1$ in this
case.

Similar results with respect to quantum discord were also obtained
recently \cite{Piani2012,Paula2013}. In particular, it was pointed
out in \cite{Piani2012} that the geometric discord $\Delta_{\mathrm{G}}^{X|Y}=(\Delta_{2}^{X|Y})^{2}$
can increase under local operations on any of the parties $X$ or
$Y$, while most quantifiers of discord known in the literature do
not increase under quantum operations on the subsystem $Y$. This
result was later extended to all measures of discord $\Delta_{p}$
for $p>1$ \cite{Paula2013}. On the one hand, this observation together
with Eq. (\ref{eq:HS-3}) provides strong constraints for the possible
applications of entanglement and discord quantifiers based on Schatten
norms. On the other hand, the close relation of $E_{2}$ to the problem
of finding optimal entanglement witnesses \cite{Bertlmann2002,Bertlmann2005}
and the connection between $E_{p}$ and $\Delta_{p}$ established
in Theorem \ref{thm:3} demonstrate the use of these quantities for
understanding the structure of entanglement from a geometric perspective.

\section{\label{sec:Noisy}Noisy entanglement distribution}

In the scenario considered so far, Alice and Bob aimed at distributing
entanglement by having access to a noiseless quantum channel. Since
noise is unavoidable in any realistic experiment, we will now consider
the more general situation in which the channel used for entanglement
distribution is noisy. Similarly to the foregoing discussion we assume
that Alice and Bob have access to a tripartite initial state $\rho_{i}=\rho^{ABC}$,
where Alice is initially in possession of the particles $A$ and $C$,
and Bob is in possession of the remaining particle $B$. If Alice
uses a noisy channel $\Lambda^{C}$ to send her particle $C$ to Bob,
they end up in the final state $\rho_{f}=\Lambda^{C}[\rho_{i}]$.
The amount of entanglement distributed in this process is then given
by $E^{A|BC}(\rho_{f})-E^{AC|B}(\rho_{i})$.

Having introduced the concept of noisy entanglement distribution,
we are now in position to extend Eq. (\ref{eq:main}) to this general
scenario. In the following theorem we will show that noisy entanglement
distribution is in general limited by the amount of discord in each
of the states $\rho_{i}$ and $\rho_{f}$. 
\begin{thm}
\label{thm:noisy}Given a quantum channel $\Lambda^{C}$ and two states
$\rho_{i}=\rho^{ABC}$ and $\rho_{f}=\Lambda^{C}[\rho_{i}]$, the
following inequality holds: 
\begin{equation}
\min\{\Delta^{C|AB}(\rho_{i}),\Delta^{C|AB}(\rho_{f})\}\geq E^{A|BC}(\rho_{f})-E^{AC|B}(\rho_{i}).\label{eq:theorem4}
\end{equation}
Here, $E$ and $\Delta$ are any quantifiers of entanglement and discord
which satisfy Eq. (\ref{eq:main}). \end{thm}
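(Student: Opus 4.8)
The plan is to prove the two inequalities
\begin{equation}
\Delta^{C|AB}(\rho_i) \geq E^{A|BC}(\rho_f) - E^{AC|B}(\rho_i), \quad \Delta^{C|AB}(\rho_f) \geq E^{A|BC}(\rho_f) - E^{AC|B}(\rho_i)
\end{equation}
separately and then take the minimum of the two left-hand sides; since both discord terms dominate the \emph{same} right-hand side, so does their minimum, which is exactly (\ref{eq:theorem4}). The only two ingredients I would use are the noiseless bound (\ref{eq:main}) and the monotonicity of the entanglement quantifier $E$ under a local channel applied to one side of a bipartition. The latter holds for the distance-based measures treated above: if $\sigma^*$ is the closest admissible state to $\rho$ and the relevant set (${\cal S}$, PPT, or ND) is closed under the local map, then the image of $\sigma^*$ is a feasible competitor and contractivity of $D$ gives the bound.

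First I would dominate the right-hand side by $\Delta^{C|AB}(\rho_f)$. Applying (\ref{eq:main}) directly to the final state yields
\begin{equation}
\Delta^{C|AB}(\rho_f) \geq E^{A|BC}(\rho_f) - E^{AC|B}(\rho_f).
\end{equation}
Because $\Lambda^C$ acts only on $C$, it is a \emph{local} operation with respect to the cut $AC|B$ (particle $C$ sitting on the $AC$ side), so $E^{AC|B}(\rho_f) = E^{AC|B}(\Lambda^C[\rho_i]) \leq E^{AC|B}(\rho_i)$. Substituting this into the previous display gives the second of the two target inequalities.

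Next I would dominate the same right-hand side by $\Delta^{C|AB}(\rho_i)$. Applying (\ref{eq:main}) to the initial state gives $\Delta^{C|AB}(\rho_i) \geq E^{A|BC}(\rho_i) - E^{AC|B}(\rho_i)$. Now $\Lambda^C$ acts on $C$, which lies on the $BC$ side of the cut $A|BC$, so the same monotonicity argument gives $E^{A|BC}(\rho_f) = E^{A|BC}(\Lambda^C[\rho_i]) \leq E^{A|BC}(\rho_i)$; replacing $E^{A|BC}(\rho_i)$ by the smaller quantity $E^{A|BC}(\rho_f)$ preserves the inequality and produces the first target. Taking the minimum then finishes the argument.

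The step I expect to need the most care is the monotonicity claim, and specifically the verification that each set is mapped into itself by the local channel. Closure is immediate for separable states and follows for nondistillable states because a local channel followed by distillation is still LOCC. The PPT case is the one to check explicitly: since $\Lambda^C$ acts on a subsystem disjoint from the transposed party, it commutes with the partial transpose, so $(\Lambda^C[\sigma])^{T_A} = \Lambda^C[\sigma^{T_A}] \geq 0$ whenever $\sigma$ is PPT across the corresponding cut, and the image stays in PPT. Once closure and contractivity of $D$ are in hand the two monotonicity inequalities, and hence the theorem, follow at once.
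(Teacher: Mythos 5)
Your proof is correct and follows essentially the same route as the paper's: apply the noiseless bound (\ref{eq:main}) separately to $\rho_{i}$ and to $\rho_{f}$, use monotonicity of $E$ under the local channel $\Lambda^{C}$ to replace $E^{A|BC}(\rho_{i})$ by $E^{A|BC}(\rho_{f})$ in the first case and $E^{AC|B}(\rho_{f})$ by $E^{AC|B}(\rho_{i})$ in the second, and then take the minimum. Your additional verification that the sets ${\cal S}$, PPT, and ND are closed under the local map merely fills in, for the distance-based quantifiers, the step the paper invokes as the standard fact that entanglement does not increase under local noise.
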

\begin{proof}
We first apply Eq. (\ref{eq:main}) to the state $\rho_{i}$, thus
arriving at $\Delta^{C|AB}(\rho_{i})\geq E^{A|BC}(\rho_{i})-E^{AC|B}(\rho_{i})$.
Then, the inequality $\Delta^{C|AB}(\rho_{i})\geq E^{A|BC}(\rho_{f})-E^{AC|B}(\rho_{i})$
follows by recalling that entanglement does not increase under local
noise, i.e., $E^{A|BC}(\rho_{i})\geq E^{A|BC}(\rho_{f})$. Using analogous
reasoning one can also prove the inequality $\Delta^{C|AB}(\rho_{f})\geq E^{A|BC}(\rho_{f})-E^{AC|B}(\rho_{i})$.
Application of Eq. (\ref{eq:main}) to the state $\rho_{f}$ gives
us the inequality $\Delta^{C|AB}(\rho_{f})\geq E^{A|BC}(\rho_{f})-E^{AC|B}(\rho_{f})$.
One then completes the proof by using $E^{AC|B}(\rho_{f})\leq E^{AC|B}(\rho_{i})$,
which again follows from the fact that entanglement does not increase
under local noise. As quantum discord can increase or decrease under
local noise \cite{Dakic2010,Streltsov2011,Campbell2011,Ciccarello2012},
the claim follows. 
\end{proof}

\subsection{Divisible channels}

\begin{figure}
\includegraphics[width=1\columnwidth]{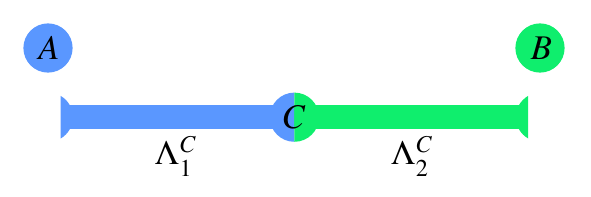}

\caption{\label{fig:channel}Decomposition of a noisy channel $\Lambda^{C}$
in two channels $\Lambda_{1}^{C}$ and $\Lambda_{2}^{C}$. For an
initial state $\rho_{i}=\rho^{ABC}$ the final state after the application
of the channel is given by $\rho_{f}=\Lambda^{C}[\rho_{i}]=\Lambda_{2}^{C}[\Lambda_{1}^{C}[\rho_{i}]]$.
The figure illustrates the intermediate state $\tilde{\rho}=\Lambda_{1}^{C}[\rho_{i}]$
after the application of $\Lambda_{1}^{C}$ only. See main text for
details.}
\end{figure}
Let us consider a decomposition of the channel $\Lambda^{C}$ into
two channels $\Lambda_{1}^{C}$ and $\Lambda_{2}^{C}$ such that the
successive application of these channels is equivalent to the application
of $\Lambda^{C}$: 
\begin{equation}
\rho_{f}=\Lambda^{C}[\rho_{i}]=\Lambda_{2}^{C}[\Lambda_{1}^{C}[\rho_{i}]];
\end{equation}
see also Fig. \ref{fig:channel} for an illustration. If such a decomposition
is possible with nonunitary $\Lambda_{1}^{C}$ and $\Lambda_{2}^{C}$,
the channel $\Lambda^{C}$ is called divisible \cite{Wolf2008}. By
introducing an intermediate state 
\begin{equation}
\tilde{\rho}=\Lambda_{1}^{C}[\rho_{i}]
\end{equation}
we will now show that the amount of distributed entanglement is in
general bounded above by the amount of discord in the state $\tilde{\rho}$:

\begin{equation}
\Delta^{C|AB}(\tilde{\rho})\geq E^{A|BC}(\rho_{f})-E^{AC|B}(\rho_{i}).\label{eq:noisy-3}
\end{equation}
As in the foregoing discussion, we assume that $E$ and $\Delta$
are quantifiers of entanglement and discord satisfying Eq. (\ref{eq:main}).
Under this assumption, Eq. (\ref{eq:noisy-3}) can be proven using
similar arguments as in the proof of Eq. (\ref{eq:theorem4}). In
particular, we can apply Eq. (\ref{eq:main}) to the intermediate
state $\tilde{\rho}$, thus obtaining the inequality $\Delta^{C|AB}(\tilde{\rho})\geq E^{A|BC}(\tilde{\rho})-E^{AC|B}(\tilde{\rho})$.
The proof of Eq. (\ref{eq:noisy-3}) is complete by making use of
the fact that entanglement does not increase under local noise, leading
to the inequalities $E^{A|BC}(\tilde{\rho})\geq E^{A|BC}(\rho_{f})$
and $E^{AC|B}(\tilde{\rho})\leq E^{AC|B}(\rho_{i})$.

\subsection{Markovian time evolution}

Here we will see that the results presented in the previous section
have a nice application in the scenario in which the particle $C$
used for entanglement distribution is subject to a Markovian time
evolution $\Lambda_{(t_{2},t_{1})}^{C}$. If we assume that the process
starts with the initial state $\rho_{i}=\rho^{ABC}$ at the time $t=0$,
then for any time $t\geq0$ the time-evolved state is given by 
\begin{equation}
\rho_{t}=\Lambda_{(t,0)}^{C}[\rho_{i}].\label{eq:rhot}
\end{equation}
Denoting then by $T$ the total time required for the process, the
corresponding final state $\rho_{f}$ can be written as 
\begin{equation}
\rho_{f}=\rho_{T}=\Lambda_{(T,0)}^{C}[\rho_{i}].\label{eq:rhof}
\end{equation}

We are now in position to prove that the amount of entanglement distributed
via a Markovian time evolution is bounded from above by the amount
of discord in the time-evolved state $\rho_{t}$ for any $T\geq t\geq0$:
\begin{equation}
\Delta^{C|AB}(\rho_{t})\geq E^{A|BC}(\rho_{f})-E^{AC|B}(\rho_{i}).\label{eq:Markovian}
\end{equation}
Here, $E$ and $\Delta$ are quantifiers of entanglement and discord
satisfying Eq. (\ref{eq:main}). To prove the above statement we use
the fact that any Markovian time evolution $\Lambda_{(t_{2},t_{1})}^{C}$
obeys the composition law \cite{Rivas2014}, that is, 
\begin{equation}
\Lambda_{(t_{2},t_{1})}^{C}[\rho]=\Lambda_{(t_{2},t)}^{C}\left[\Lambda_{(t,t_{1})}^{C}[\rho]\right]
\end{equation}
for any state $\rho$ and all $t_{2}\geq t\geq t_{1}\geq0$. This
together with Eqs. (\ref{eq:rhot}) and (\ref{eq:rhof}) lead us to
the following expression for the final state 
\begin{equation}
\rho_{f}=\Lambda_{(T,0)}^{C}[\rho_{i}]=\Lambda_{(T,t)}^{C}[\rho_{t}]
\end{equation}
for all $T\geq t\geq0$. One then obtains Eq. (\ref{eq:Markovian})
by applying Eq. (\ref{eq:noisy-3}) with $\tilde{\rho}=\rho_{t}$.

Let us notice that the inequality (\ref{eq:Markovian}) also implies
that the distribution of entanglement via a Markovian time evolution
is bounded above by the minimal discord $\min_{t}\Delta^{C|AB}(\rho_{t})$,
minimized over all times $t$ ranging between $0$ and the duration
of the total procedure $T$. On the other hand, any violation of Eq.
(\ref{eq:Markovian}) can also be regarded as a witness for the non-Markovianity
of the underlying time evolution. These results support the recent
attempts to detect and quantify non-Markovianity via quantum entanglement
\cite{Rivas2010} and quantum discord \cite{Girolami2012,Alipour2012,Haikka2013}.
Noting that the inequality (\ref{eq:Markovian}) is valid for a very
general class of quantifiers for entanglement and discord, further
investigation in this direction can lead to a better understanding
of entanglement and discord in the context of detecting non-Markovianity.

\section{\label{sec:Optimal}Optimal entanglement distribution without preshared
correlations}

In the foregoing discussion we considered noiseless and noisy entanglement
distribution, and presented several tools for bounding the amount
of entanglement distributed in this process. In this section we will
apply them to the following problem: \textit{How much entanglement
can be distributed via a given quantum channel?}

Let us begin with the scenario in which Alice and Bob are not correlated
initially, i.e, the initial and the final state are given by 
\begin{align}
\rho_{i}=\rho^{AC}\otimes\rho^{B}\label{eq:initial}
\end{align}
and 
\begin{align}
\rho_{f}=\Lambda^{C}[\rho^{AC}]\otimes\rho^{B}
\end{align}
respectively. We assume again that Alice is initially in possession
of the particles $A$ and $C$, while Bob holds the particle $B$.
In the distribution process, the particle $C$ is sent from Alice
to Bob via the quantum channel $\Lambda^{C}$. Thus, the initial entanglement
between Alice and Bob is zero, and the amount of distributed entanglement
is given by $E^{A|C}(\Lambda^{C}[\rho^{AC}])$. 

In the following, we are interested in optimal entanglement distribution,
i.e., we ask which initial states $\rho^{AC}$ lead to the maximal
final entanglement after the application of a quantum channel $\Lambda^{C}$.
Clearly, if the quantum channel $\Lambda^{C}$ is noiseless, the optimal
distribution strategy is achieved if Alice prepares her particles
$A$ and $C$ in the maximally entangled state 
\begin{equation}
\ket{\phi^{+}}^{AC}=\frac{1}{\sqrt{d_{C}}}\sum_{i=0}^{d_{C}-1}\ket{ii}^{AC}\label{MaxEnt}
\end{equation}
and sends the particle $C$ to Bob.

Interestingly, as we will see below, this strategy is not always optimal
if the quantum channel $\Lambda^{C}$ is noisy. In passing, it is
crucial to notice that all maximally entangled states show the same
performance for entanglement distribution, i.e., 
\begin{equation}
E(\Lambda^{C}[\proj{\phi_{\mathrm{me}}}^{AC}])=E(\Lambda^{C}[\proj{\phi^{+}}^{AC}])\label{eq:maximaly entangled}
\end{equation}
is true for any maximally entangled state $\ket{\phi_{\mathrm{me}}}^{AC}$,
any entanglement measure $E$, and any noisy channel $\Lambda^{C}$.
This can be seen by first noting that any maximally entangled state
$\ket{\phi_{\mathrm{me}}}^{AC}$ can be written as $\ket{\phi_{\mathrm{me}}}^{AC}=U_{A}\ket{\phi^{+}}^{AC}$,
where $U_{A}$ is a unitary acting on the subsystem $A$. Then, to
get Eq. (\ref{eq:maximaly entangled}) one uses the facts that $U_{A}$
commutes with $\Lambda^{C}$ and that any entanglement quantifier
$E$ is invariant under local unitaries \cite{Horodecki2009}.

\subsection{\label{sub:EoF}Relation to entanglement of formation}

In this section we will show that maximally entangled states are optimal
for entanglement distribution for all noisy channels if the exchanged
particle $C$ is a qubit and the amount of entanglement is quantified
via the entanglement of formation $E_{f}$. We then have the following
theorem.
\begin{thm}
\label{thm:EoF}For any mixed state $\rho^{AC}$ with $d_{A}\geq d_{C}=2$,
and any channel $\Lambda^{C}$ the following inequality holds: 
\begin{equation}
E_{f}(\Lambda^{C}[\proj{\phi^{+}}^{AC}])\geq E_{f}(\Lambda^{C}[\rho^{AC}]).
\end{equation}
\end{thm}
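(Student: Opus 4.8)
The plan is to show that for a qubit $C$, every channel $\Lambda^C$ acting on $\proj{\phi^+}^{AC}$ produces at least as much entanglement of formation as it produces on any other state $\rho^{AC}$. The natural strategy is to exploit the fact that $\ket{\phi^+}^{AC}$ is, up to local unitaries on $A$, the \emph{fixed point} of the isomorphism between channels and states on $C$: sending half of $\ket{\phi^+}$ through $\Lambda^C$ produces (a dilation of) the Choi state of $\Lambda^C$, which encodes the full action of the channel. Concretely, I would first write the arbitrary input as $\rho^{AC}=\sum_k p_k \proj{\psi_k}^{AC}$ via an optimal decomposition for $E_f$, so that by convexity and the definition of $E_f$ it suffices to prove the statement for a \emph{pure} input $\ket{\psi}^{AC}$ with $d_A\ge d_C=2$.

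For a pure bipartite state, Schmidt decomposition gives $\ket{\psi}^{AC}=\sum_{i=0}^{1}\sqrt{\lambda_i}\ket{a_i}^A\ket{i}^C$ with $\lambda_0+\lambda_1=1$. The key step is to realize $\ket{\psi}^{AC}$ as a local filtering of $\ket{\phi^+}^{AC}$: there is an operator $F^A$ (built from the Schmidt coefficients and vectors) such that $\ket{\psi}^{AC}\propto (F^A\otimes\openone^C)\ket{\phi^+}^{AC}$. Because the filtering $F^A$ acts only on $A$ and commutes with the channel $\Lambda^C$ on $C$, one has $\Lambda^C[\proj{\psi}^{AC}]\propto (F^A\otimes\openone)\,\Lambda^C[\proj{\phi^+}^{AC}]\,(F^{A\dagger}\otimes\openone)$. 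Thus the output on an arbitrary pure input is obtained from the canonical output $\Lambda^C[\proj{\phi^+}^{AC}]$ by a \emph{local, trace-nonincreasing} operation on $A$, i.e. a stochastic LOCC map. Since $E_f$ is an entanglement monotone that does not increase on average under LOCC — and since here only a single Kraus branch on $A$ is applied — we get $E_f(\Lambda^C[\proj{\psi}^{AC}])\le E_f(\Lambda^C[\proj{\phi^+}^{AC}])$, which is exactly what is needed.

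The main obstacle I anticipate is the normalization bookkeeping in the filtering step: the map $\ket{\phi^+}\mapsto\ket{\psi}$ via $F^A$ is not trace-preserving, so one cannot directly invoke plain LOCC monotonicity of $E_f$; one must control the probability weights and argue that the \emph{successful branch} of a local POVM on $A$ yields $\Lambda^C[\proj{\psi}^{AC}]$ while the monotonicity inequality only improves under discarding the other branches. The cleanest way to handle this is to embed $F^A$ into a genuine local measurement $\{F^A, \sqrt{\openone-F^{A\dagger}F^A}\}$ (possible precisely because the qubit dimension $d_C=2$ together with $d_A\ge 2$ leaves enough room to normalize $F^A$ so that $F^{A\dagger}F^A\le\openone$), apply it to $\Lambda^C[\proj{\phi^+}^{AC}]$, and use that $E_f$ cannot increase under such a local selective measurement. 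The constraint $d_C=2$ enters here because for a qubit the Schmidt rank is at most two and the filtering operator is easy to normalize; the argument is expected to be delicate (and is known to fail in general) for higher-dimensional $C$.
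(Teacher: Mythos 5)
Your reduction to pure inputs via convexity matches the paper, and your filtering construction is sound: indeed $\ket{\psi}^{AC}\propto(F^{A}\otimes\openone^{C})\ket{\phi^{+}}^{AC}$ and, since $F^{A}$ commutes with $\Lambda^{C}$, the successful branch of the local measurement $\{\tilde{F}^{A},\sqrt{\openone-\tilde{F}^{A\dagger}\tilde{F}^{A}}\}$ applied to $\Lambda^{C}[\proj{\phi^{+}}^{AC}]$ yields $\Lambda^{C}[\proj{\psi}^{AC}]$. The gap is in your last step, which rests on a false principle. Monotonicity of $E_{f}$ under selective LOCC holds only \emph{on average}, $E_{f}(\sigma)\geq\sum_{k}p_{k}E_{f}(\sigma_{k})$; it does \emph{not} say that a single post-selected branch has no more entanglement than the original state. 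Post-selected local filtering can increase entanglement with nonzero probability --- this is precisely the Procrustean method of entanglement concentration (a filter on $A$ maps $\sqrt{1-\alpha}\ket{00}+\sqrt{\alpha}\ket{11}$ to $\ket{\phi^{+}}$ probabilistically). Quantitatively, your normalized filter succeeds with probability $p=1/(2\lambda_{\max})$ (the reduced state of $\Lambda^{C}[\proj{\phi^{+}}^{AC}]$ on $A$ is $\openone/2$), so average monotonicity yields only
\begin{equation}
E_{f}(\Lambda^{C}[\proj{\phi^{+}}^{AC}])\geq\frac{1}{2\lambda_{\max}}\,E_{f}(\Lambda^{C}[\proj{\psi}^{AC}]),
\end{equation}
which is strictly weaker than the claim whenever $\ket{\psi}\neq\ket{\phi^{+}}$. ``Discarding the other branches'' is exactly the step where entanglement can grow, contrary to what you assert.

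The gap can be closed, but not by generic entanglement monotonicity: one needs a quantity with a known \emph{multiplicative} behavior under local filters. That is what the paper uses. For two qubits $E_{f}=g(\mathcal{C})$ with $g$ nondecreasing (Wootters), and the concurrence obeys the factorization law of Konrad et al., $\mathcal{C}(\Lambda^{C}[\proj{\psi}^{AC}])=\mathcal{C}(\Lambda^{C}[\proj{\phi^{+}}^{AC}])\cdot\mathcal{C}(\proj{\psi}^{AC})$, which together with $\mathcal{C}\leq1$ gives the single-branch inequality directly. In your language: under a filter $F^{A}$ the two-qubit concurrence rescales by $|\det F^{A}|$ divided by the success probability, which for your $\tilde{F}^{A}$ evaluates exactly to $2\sqrt{\lambda_{0}\lambda_{1}}=\mathcal{C}(\proj{\psi}^{AC})\leq1$. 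This determinant-based covariance is a special algebraic property of concurrence (and hence of $E_{f}$ in dimension $2\times2$), not a consequence of LOCC monotonicity; supplying it (or citing the factorization law) is what your argument is missing.
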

\begin{proof}
We first recall that the entanglement of formation is a convex function
of the state. This implies that for any mixed state $\rho^{AC}$ there
exists a pure state $\ket{\psi}^{AC}$ which shows at least the same
performance for entanglement distribution: 
\begin{equation}
E_{f}(\Lambda^{C}[\proj{\psi}^{AC}])\geq E_{f}(\Lambda^{C}[\rho^{AC}]).
\end{equation}
To complete the proof we will show that the maximally entangled state
has the best performance among all pure states, i.e., 
\begin{equation}
E_{f}(\Lambda^{C}[\proj{\phi^{+}}^{AC}])\geq E_{f}(\Lambda^{C}[\proj{\psi}^{AC}])
\end{equation}
for any pure state $\ket{\psi}^{AC}$ with $d_{A}\geq2$, $d_{C}=2$,
and any single-qubit channel $\Lambda^{C}$. At this point, it is
important to note that the state $\ket{\psi}^{AC}$ is effectively
a two-qubit state, even if the dimension of the subsystem $A$ is
larger than two. This follows from the Schmidt decomposition of $\ket{\psi}^{AC}$,
which due to the fact that the subsystem $C$ is two-dimensional,
is of the form $\ket{\psi}^{AC}=\lambda_{0}\ket{00}+\lambda_{1}\ket{11}$.
The state $\Lambda^{C}[\ket{\psi}\bra{\psi}^{AC}]$ can thus be regarded
as a mixed state of two qubits. With this in mind, we can now use
the fact that for all two-qubit states the entanglement of formation
admits a simple formula: $E_{f}=g(\mathcal{C})$, where $g$ is a
nondecreasing function and $\mathcal{C}$ is the concurrence \cite{Wootters1998}.
The final ingredient of our proof is the factorization law for concurrence
(see Eq. (5) in \cite{Konrad2008}). Adapted to our notation it reads
\begin{equation}
\mathcal{C}(\Lambda^{C}[\proj{\psi}^{AC}])=\mathcal{C}(\Lambda^{C}[\proj{\phi^{+}}^{AC}])\cdot\mathcal{C}(\proj{\psi}^{AC}).
\end{equation}
Since the concurrence is never larger than one, we arrive at the following
inequality: 
\begin{equation}
\mathcal{C}(\Lambda^{C}[\proj{\phi^{+}}^{AC}])\geq\mathcal{C}(\Lambda^{C}[\proj{\psi}^{AC}]).
\end{equation}
Note that this inequality also holds if the concurrence $\mathcal{C}$
is replaced by the entanglement of formation $E_{f}$, since the latter
is a nondecreasing function of the concurrence. This observation completes
the proof %
\footnote{Note that in the proof of Theorem \ref{thm:EoF} we used the fact
that the entanglement of formation is convex. In particular, convexity
was used to prove the existence of a pure state which is optimal for
entanglement distribution. It is also possible to prove the theorem
without referring to convexity. This can be seen by noting that for
any mixed state $\rho^{AC}$ there exists a purification $\ket{\psi^{RAC}}$
such that $\mathrm{Tr}_{R}[\ket{\psi}\bra{\psi}^{RAC}]=\rho^{AC}$.
Since entanglement does not increase under discarding systems, the
purification $\ket{\psi^{RAC}}$ shows at least the same performance
for entanglement distribution as the state $\rho^{AC}$: $E^{RA|C}(\Lambda^{C}[\ket{\psi}\bra{\psi}^{RAC}])\geq E^{A|C}(\Lambda^{C}[\rho^{AC}])$.%
}. 
\end{proof}
It is worth mentioning that the above result can be generalized to
a larger class of entanglement measures, namely to all those measures
which for two qubits can be written as a nondecreasing function of
concurrence, that is, 
\begin{equation}
E=g(\mathcal{C}).
\end{equation}
This can be seen by exploiting the same argumentation as before. Apart
from the entanglement of formation, examples of such measures are
the geometric measure of entanglement \cite{Shimony1995,Wei2003},
the Bures measure of entanglement \cite{Vedral1997,Vedral1998}, and
the Groverian measure of entanglement \cite{Biham2002,Shapira2006}.
For two qubits all those measures reduce to a nondecreasing function
of concurrence, see Fig. 4 in ref. \cite{Streltsov2010}.

\subsection{Relation to Pauli channel\label{sub:Pauli}}

We now show that for an important type of noise - the Pauli channel
- the statement made in the previous section can be generalized to
all entanglement measures. The action of the Pauli channel reads:
\begin{equation}
\Lambda_{\mathrm{p}}^{C}[\rho^{AC}]=\sum_{i=0}^{3}p_{i}\sigma_{i}^{C}\rho^{AC}\sigma_{i}^{C},
\end{equation}
where the exchanged particle $C$ is a qubit and $\sigma_{i}$ are
Pauli matrices with $\sigma_{0}=\openone$. We have the following
\begin{figure}
\includegraphics[width=1\columnwidth]{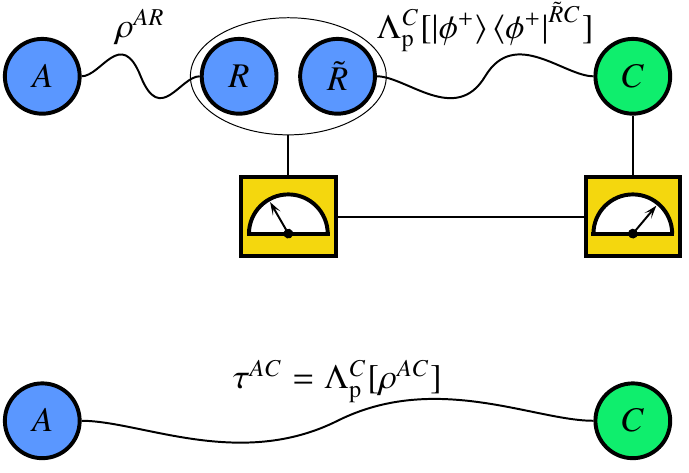}

\caption{\label{fig:teleportation}The state $\Lambda_{\mathrm{p}}^{C}[\ket{\phi^{+}}\bra{\phi^{+}}^{\tilde{R}C}]$
can be used to teleport the particle $R$ of the state $\rho^{AR}$
by performing a joint Bell measurement on $R$ and $\tilde{R}$, and
a conditional rotation on $C$ (upper figure). This procedure leaves
the subsystem $AC$ in the final state $\tau^{AC}=\Lambda_{\mathrm{p}}^{C}[\rho^{AC}]$
(lower figure).}
\end{figure}

\begin{thm}
\label{thm:Pauli}For any mixed state $\rho^{AC}$ with $d_{A}\geq d_{C}=2$
and any Pauli channel $\Lambda_{\mathrm{p}}^{C}$ the following inequality
holds: 
\begin{equation}
E(\Lambda_{\mathrm{p}}^{C}[\proj{\phi^{+}}^{AC}])\geq E(\Lambda_{\mathrm{p}}^{C}[\rho^{AC}])
\end{equation}
for any entanglement measure $E$.\end{thm}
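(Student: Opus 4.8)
The plan is to reduce the claim for a general entanglement measure to a statement about teleportation, using the maximally entangled state as a resource. The key structural fact is the one hinted at by Fig.~\ref{fig:teleportation}: the output state $\Lambda_{\mathrm{p}}^{C}[\proj{\phi^+}^{AC}]$ is, up to relabeling, the Choi-type resource state associated with the Pauli channel, and such resource states can be used to implement the channel via teleportation. First I would show that for \emph{any} state $\rho^{AC}$, the output $\Lambda_{\mathrm{p}}^{C}[\rho^{AC}]$ can be obtained from $\Lambda_{\mathrm{p}}^{C}[\proj{\phi^+}^{AC}]$ by an LOCC protocol acting only on the $C$-side (plus an ancilla carrying $\rho$), without touching the $A$-system across the $A|BC$-type cut that defines the entanglement being measured.

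Concretely, I would set up the teleportation as in the figure: introduce a purification or an explicit second copy $\tilde R$, prepare $\Lambda_{\mathrm{p}}^{C}[\proj{\phi^+}^{\tilde R C}]$, and note that because the Pauli operators $\{\sigma_i^C\}$ form the standard teleportation correction set, performing a Bell measurement on $R\tilde R$ of the state $\rho^{AR}$ together with the conditional Pauli rotation on $C$ teleports $R$ onto $C$. The crucial point — and this is where the special structure of the \emph{Pauli} channel enters — is that the correction operators commute appropriately with $\Lambda_{\mathrm{p}}^{C}$ (each correction $\sigma_j^C$ maps $\sum_i p_i \sigma_i \rho \sigma_i$ to $\sum_i p_i \sigma_i (\sigma_j \rho \sigma_j) \sigma_i$ since the $\sigma_i$ only permute among themselves up to phases), so that the teleported-and-corrected output is exactly $\tau^{AC}=\Lambda_{\mathrm{p}}^{C}[\rho^{AC}]$. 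This commutation is precisely what fails for a generic channel, so it is essential that the noise is of Pauli type.

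Once this is established, the entanglement ordering follows immediately from monotonicity. The whole teleportation protocol — Bell measurement on $R\tilde R$, classical communication of the outcome, conditional unitary on $C$ — is an LOCC operation with respect to the bipartite cut that keeps $A$ (together with $R$) on one side and $C$ on the other. Since every entanglement measure $E$ is, by definition, nonincreasing under LOCC, applying it before and after this protocol gives
\begin{equation}
E(\Lambda_{\mathrm{p}}^{C}[\proj{\phi^+}^{AC}])\geq E(\Lambda_{\mathrm{p}}^{C}[\rho^{AC}]),
\end{equation}
after discarding the ancillary systems, which only decreases entanglement further. In the mixed-state case I would first invoke the purification argument already used in the footnote to Theorem~\ref{thm:EoF}, replacing $\rho^{AC}$ by a pure extension and using that discarding systems does not increase $E$.

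The main obstacle I expect is verifying carefully that the required map is genuinely LOCC across the correct bipartition and that the classically-controlled corrections can be absorbed without disturbing the $A$-system — in other words, rigorously justifying the commutation of the Pauli corrections through the channel and confirming that the conditional operations act only on $C$. A secondary subtlety is bookkeeping the identification of the resource state $\Lambda_{\mathrm{p}}^{C}[\proj{\phi^+}^{\tilde R C}]$ with the one appearing in the theorem statement and ensuring that tracing out $\tilde R$ at the end yields precisely $\Lambda_{\mathrm{p}}^{C}[\rho^{AC}]$ on $AC$; both are routine once the teleportation identity is set up, but they must be stated explicitly for the argument to be airtight.
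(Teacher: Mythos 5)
Your proposal is correct and follows essentially the same route as the paper's own proof: the resource state $\Lambda_{\mathrm{p}}^{C}[\proj{\phi^{+}}^{\tilde{R}C}]$ is used to teleport the $R$ part of an arbitrary $\rho^{AR}$, the commutation $\Lambda_{\mathrm{p}}^{C}[\sigma_{i}^{C}\rho\sigma_{i}^{C}]=\sigma_{i}^{C}\Lambda_{\mathrm{p}}^{C}[\rho]\sigma_{i}^{C}$ makes the corrected output exactly $\Lambda_{\mathrm{p}}^{C}[\rho^{AC}]$ independent of the measurement outcome, and LOCC monotonicity of $E$ across the $AR\tilde{R}|C$ cut yields the inequality. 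The only (harmless) deviation is your final purification step for mixed states, which is superfluous here since the teleportation argument already accepts an arbitrary mixed $\rho^{AR}$.
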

\begin{proof}
Let us start by introducing two additional particles $R$ and $\tilde{R}$
with $d_{R}=d_{\tilde{R}}=2$. We will now show that the state $\Lambda_{\mathrm{p}}^{C}[\proj{\phi^{+}}^{\tilde{R}C}]$
can be used for teleportation in the following way: if two parties
share the state $\Lambda_{\mathrm{p}}^{C}[\proj{\phi^{+}}^{\tilde{R}C}]$
and apply the standard teleportation protocol \cite{Bennett1993}
for teleporting the two-dimensional subsystem $R$ of a total state
$\rho^{AR}$, they will end up sharing the state $\Lambda_{\mathrm{p}}^{C}[\rho^{AC}]$,
see also \cite{Bowen2001} for similar considerations. This can be
seen explicitly by considering the essential steps of the standard
teleportation protocol (see Fig. \ref{fig:teleportation}). In the
first step, a joint Bell measurement is performed on the subsystems
$R$ and $\tilde{R}$. Depending on the outcome $i$ of the measurement,
the subsystem $AC$ is found in one of the four states $\Lambda_{\mathrm{p}}^{C}[\sigma_{i}^{C}\rho^{AC}\sigma_{i}^{C}]$
with $0\leq i\leq3$. In the final step, a conditioned unitary rotation
$\sigma_{i}^{C}$ is applied on the subsystem $C$, leading to the
final state 
\begin{equation}
\tau^{AC}=\sigma_{i}^{C}\Lambda_{\mathrm{p}}^{C}[\sigma_{i}^{C}\rho^{AC}\sigma_{i}^{C}]\sigma_{i}^{C}.
\end{equation}
At this point, it is crucial to note that the Pauli channel commutes
with the Pauli matrices $\sigma_{i}^{C}$, i.e., 
\begin{equation}
\Lambda_{\mathrm{p}}^{C}[\sigma_{i}^{C}\rho^{AC}\sigma_{i}^{C}]=\sigma_{i}^{C}\Lambda_{\mathrm{p}}^{C}[\rho^{AC}]\sigma_{i}^{C},\label{eq:commutation}
\end{equation}
which can be seen by inspection using the anticommutation relation
$\sigma_{a}\sigma_{b}=-\sigma_{b}\sigma_{a}$ for $1\leq a,b\leq3$.
Using Eq. (\ref{eq:commutation}) we see that the final state $\tau^{AC}$
becomes independent from the outcome of the measurement $i$: 
\begin{equation}
\tau^{AC}=\Lambda_{\mathrm{p}}^{C}[\rho^{AC}].
\end{equation}
Finally, note that all steps mentioned above can be performed by using
local operations and classical communication (see Fig. \ref{fig:teleportation}).
This implies that the final state $\tau^{AC}$ cannot have more entanglement
than the state $\Lambda_{\mathrm{p}}^{C}[\proj{\phi^{+}}^{\tilde{R}C}]$,
regardless of the entanglement measure $E$ used to quantify it. This
completes the proof. 
\end{proof}
It should be stressed that the result presented in Theorem \ref{thm:Pauli}
can also be extended to the scenario in which the channel used for
entanglement distribution is a tensor product of different single-qubit
Pauli channels. As an example, consider a four dimensional particle
$C$ consisting of two qubits $C_{1}$ and $C_{2}$. The channel $\Lambda_{\mathrm{p}}^{C}$
is now of the form $\Lambda_{\mathrm{p}}^{C}=\Lambda_{\mathrm{p}}^{C_{1}}\otimes\widetilde{\Lambda}_{\mathrm{p}}^{C_{2}}$,
where $\Lambda_{\mathrm{p}}^{C_{1}}$ and $\widetilde{\Lambda}_{\mathrm{p}}^{C_{2}}$
are two (possibly different) Pauli channels. The action of this channel
onto an arbitrary state $\rho^{AC}=\rho^{AC_{1}C_{2}}$ is given by
\begin{equation}
\Lambda_{\mathrm{p}}^{C}[\rho^{AC}]=\Lambda_{\mathrm{p}}^{C_{1}}\otimes\widetilde{\Lambda}_{\mathrm{p}}^{C_{2}}[\rho^{AC_{1}C_{2}}].
\end{equation}
Using similar lines of reasoning as in the proof of Theorem \ref{thm:Pauli}
we see that the best performance in this case is also achieved for
the maximally entangled state, i.e., the inequality 
\begin{equation}
E(\Lambda_{\mathrm{p}}^{C}[\proj{\phi^{+}}^{AC}])\geq E(\Lambda_{\mathrm{p}}^{C}[\rho^{AC}])
\end{equation}
holds for any state $\rho^{AC}$ with $d_{A}\geq d_{C}=4$ and the
maximally entangled state $\ket{\phi^{+}}^{AC}=(1/2)\sum_{i=0}^{3}\ket{ii}^{AC}$.
This statement is also true if the exchanged particle $C$ consists
of $n$ qubits, and the channel $\Lambda_{\mathrm{p}}^{C}$ is a combination
of $n$ (possibly different) single-qubit Pauli channels. In this
case, the best performance is achieved for the maximally entangled
state (\ref{MaxEnt}) with $d_{A}\geq d_{C}=2^{n}$.

Finally, we note that similar arguments can also be applied to a more
general family of channels defined as follows: 
\begin{equation}
\Lambda^{C}[\rho^{AC}]=\sum_{i}p_{i}U_{i}^{C}\rho^{AC}\left(U_{i}^{C}\right)^{\dagger},\label{eq:channel}
\end{equation}
where the particles $A$ and $C$ can have arbitrary dimensions and
$U_{i}^{C}$ are unitary operators that act only on the particle $C$
and have the following two properties: 
\begin{itemize}
\item The unitaries $\left(U_{i}^{C}\right)^{\dagger}$ commute with the
channel $\Lambda^{C}$, i.e.,
\begin{equation}
\Lambda^{C}\left[\left(U_{i}^{C}\right)^{\dagger}\rho^{AC}U_{i}^{C}\right]=\left(U_{i}^{C}\right)^{\dagger}\Lambda^{C}\left[\rho^{AC}\right]U_{i}^{C},\label{eq:commutation-1}
\end{equation}

\item For the maximally entangled state $\ket{\phi^{+}}^{AC}$, the states
\begin{equation}
\ket{\psi_{i}}^{AC}=U_{i}^{C}\ket{\phi^{+}}^{AC}
\end{equation}
form a complete orthonormal basis, i.e., $\braket{\psi_{i}|\psi_{j}}=\delta_{ij}$
and $\sum_{i}\ket{\psi_{i}}\bra{\psi_{i}}=\openone$.
\end{itemize}
As we will show in the following, the maximally entangled state $\ket{\phi^{+}}^{AC}$
is optimal for entanglement distribution via a noisy channel given
in Eq.~(\ref{eq:channel}). We will prove this statement by following
the same reasoning as for Pauli channels, see also Fig.~\ref{fig:teleportation}.
In particular, we will show that the state $\Lambda^{C}[\ket{\phi^{+}}\bra{\phi^{+}}^{\tilde{R}C}]$
can be used to teleport the particle $R$ of dimension not larger
than $d_{C}$, such that for any state $\rho^{AR}$ the final state
has the form $\tau^{AC}=\Lambda^{C}[\rho^{AC}]$. This can be proven
by considering the state $\rho^{AR}\otimes\Lambda^{C}[\ket{\phi^{+}}\bra{\phi^{+}}^{\tilde{R}C}]$,
and applying a joint measurement on the particles $R$ and $\tilde{R}$
in the basis $\ket{\psi_{i}}^{R\tilde{R}}=U_{i}^{R}\ket{\phi^{+}}^{R\tilde{R}}$.
Conditioned on the measurement outcome $i$, the resulting post-measurement
state of the particles $A$ and $C$ is then given by $\Lambda^{C}[(U_{i}^{C})^{\dagger}\rho^{AC}U_{i}^{C}]$.
In the final step of the proof, we use Eq.~(\ref{eq:commutation-1})
and apply conditional unitary rotations $U_{i}^{C}$, arriving at
the desired final state $\tau^{AC}=\Lambda^{C}[\rho^{AC}]$. Using
the same reasoning as for the Pauli channels, this proves the optimality
of the maximally entangled state $\ket{\phi^{+}}^{AC}$ for the channels
given in Eq.~(\ref{eq:channel}). Examples of such channels more
general than the Pauli channels are the Weyl covariant channels.

\subsection{\label{sub:negativity}Relation to negativity and amplitude damping
channel}

All the results presented so far support the intuition that sending
one half of a maximally entangled state down a noisy quantum channel
represents the optimal strategy if two parties wish to distribute
entanglement between them. In particular, we have seen that this statement
is true for all single-qubit channels if the figure of merit is the
entanglement of formation, or any other entanglement measure which
for two qubits reduces to a nondecreasing function of concurrence.
Moreover, for Pauli channels we saw that this statement becomes completely
general: in this case maximally entangled states are the optimal resource,
regardless of the entanglement measure used.

\begin{figure}
\includegraphics[width=1\columnwidth]{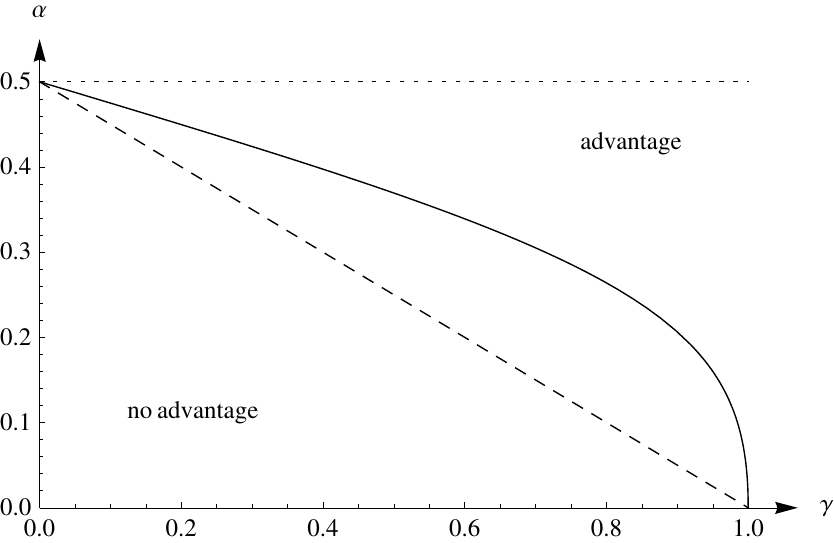}

\caption{\label{fig:amplitudedamping}The plot shows the relevant parameter
regions of the damping parameter $\gamma$ and the parameter $\alpha$
which enters the initial state $\ket{\alpha}^{AC}$, see Eq. (\ref{eq:alpha}).
Dashed line $\alpha=(1-\gamma)/2$ separates the parameter space in
two parts. For $(1-\gamma)/2<\alpha<1/2$ all states $\ket{\alpha}^{AC}$
outperform the maximally entangled state. The solid line shows the
value of $\alpha_{\max}$ which leads to the maximal logarithmic negativity
for a given damping parameter $\gamma$. The dotted line shows $\alpha=1/2$.}
\end{figure}
\begin{figure}
\includegraphics[width=1\columnwidth]{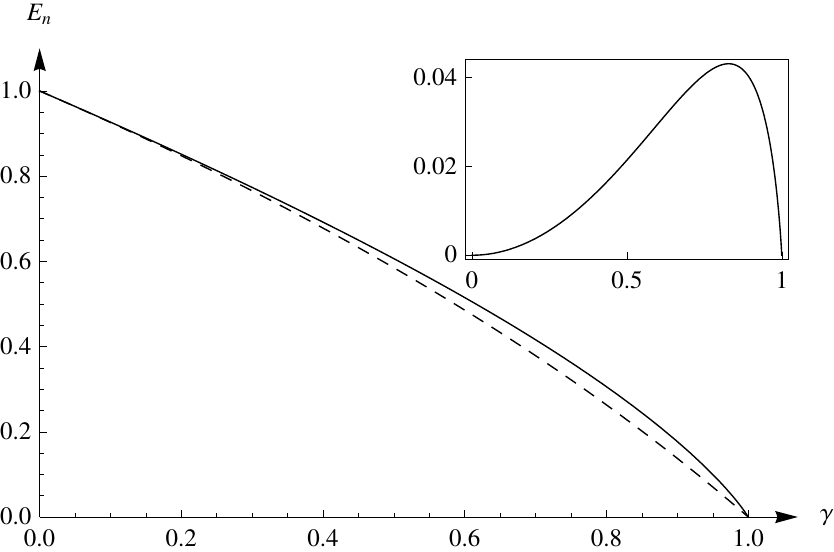}

\caption{\label{fig:amplitudedamping2}The solid line shows the logarithmic
negativity $E_{n}$ of the state $\Lambda_{\mathrm{ad}}^{C}[\proj{\alpha_{\max}}^{AC}]$
as function of the damping parameter $\gamma$. Then, the dashed line
is the corresponding logarithmic negativity of $\Lambda_{\mathrm{ad}}^{C}[\proj{\phi^{+}}^{AC}]$.
The inset shows the difference $E_{n}(\Lambda_{\mathrm{ad}}^{C}[\proj{\alpha_{\max}}^{AC}])-E_{n}(\Lambda_{\mathrm{ad}}^{C}[\proj{\phi^{+}}^{AC}])$.
The states $\ket{\alpha_{\max}}^{AC}$ outperform the maximally entangled
state $\ket{\phi^{+}}^{AC}$ in the whole region $0<\gamma<1$.}
\end{figure}
Quite surprisingly, this intuition is in general not correct \cite{Ziman2007,Pal}.
In particular, it was shown in Ref. \cite{Pal} (see Sec. III therein)
that maximally entangled states are not optimal for entanglement distribution
if the amount of entanglement is quantified by the negativity \cite{Ziyczkowski1998,Vidal2002}
which is defined as $N(\rho^{AC})=||\rho^{T_{A}}||_{1}-1$, where
$T_{A}$ denotes the partial transposition over the system $A$ and
$||M||_{1}=\mathrm{Tr}\sqrt{M^{\dagger}M}$ is the trace norm of $M$.
In what follows we will recall this result, using however a slightly
different entanglement monotone which is the logarithmic negativity
given by 
\begin{equation}
E_{n}(\rho^{AC})=\log_{2}||\rho^{T_{A}}||_{1}=\log_{2}(N(\rho^{AC})+1).
\end{equation}
We will also supplement the results of Ref. \cite{Pal} by noting
that for some quantum channels even arbitrarily little entangled states
can outperform the maximally entangled state.

The effect of suboptimality of maximally entangled states was demonstrated
for the single-qubit amplitude damping channel 
\begin{equation}
\Lambda_{\mathrm{ad}}^{C}[\rho^{AC}]=K_{1}\rho^{AC}K_{1}^{\dagger}+K_{2}\rho^{AC}K_{2}^{\dagger}\label{eq:amplitude damping}
\end{equation}
with Kraus operators $K_{1}=\ket{0}\!\bra{0}^{C}+\sqrt{1-\gamma}\ket{1}\!\bra{1}^{C}$
and $K_{2}=\sqrt{\gamma}\ket{0}\!\bra{1}^{C}$, and the damping parameter
$0\leq\gamma\leq1$. If the initial state is chosen as 
\begin{equation}
\ket{\alpha}^{AC}=\sqrt{1-\alpha}\ket{00}^{AC}+\sqrt{\alpha}\ket{11}^{AC}\label{eq:alpha}
\end{equation}
with the real parameter $0\leq\alpha\leq1$, it is straightforward
to verify that for $\tilde{\alpha}=(1-\gamma)/2$ the state $\ket{\tilde{\alpha}}^{AC}$
shows the same performance as the maximally entangled state, that
is, 
\begin{equation}
E_{n}(\Lambda_{\mathrm{ad}}^{C}[\proj{\tilde{\alpha}}^{AC}])=E_{n}(\Lambda_{\mathrm{ad}}^{C}[\proj{\phi^{+}}^{AC}]).
\end{equation}
This is illustrated in Fig. \ref{fig:amplitudedamping}, where the
parameter space of $\alpha$ and $\gamma$ is shown. The dashed line
for $\alpha=\tilde{\alpha}=(1-\gamma)/2$ divides the parameter space
into two parts. For $\alpha\leq(1-\gamma)/2$ (lower left triangle
in Fig. \ref{fig:amplitudedamping}) the state $\ket{\alpha}^{AC}$
shows no advantage when compared to the maximally entangled state,
i.e., 
\begin{equation}
E_{n}(\Lambda_{\mathrm{ad}}^{C}[\proj{\alpha}^{AC}])\leq E_{n}(\Lambda_{\mathrm{ad}}^{C}[\proj{\phi^{+}}^{AC}]).
\end{equation}
However, for $(1-\gamma)/2<\alpha<1/2$ (upper right triangle in Fig.
\ref{fig:amplitudedamping}) the corresponding state $\ket{\alpha}^{AC}$
always outperforms the maximally entangled state for the damping parameter
$0<\gamma<1$: 
\begin{equation}
E_{n}(\Lambda_{\mathrm{ad}}^{C}[\proj{\alpha}^{AC}])>E_{n}(\Lambda_{\mathrm{ad}}^{C}[\proj{\phi^{+}}^{AC}]).\label{eq:advantage}
\end{equation}

For a given damping parameter $\gamma$ we can further maximize the
logarithmic negativity of the state $\Lambda_{\mathrm{ad}}^{C}[\proj{\alpha}^{AC}]$
with respect to the parameter $\alpha$. Direct algebra shows that
the maximum is achieved for 
\begin{equation}
\alpha_{\max}=\frac{1}{\frac{\gamma}{\sqrt{1-\gamma}}+2}
\end{equation}
(see the solid line in Fig. \ref{fig:amplitudedamping}). The corresponding
quantity $E_{n}(\Lambda_{\mathrm{ad}}^{C}[\proj{\alpha_{\max}}^{AC}])$
is shown in Fig. \ref{fig:amplitudedamping2} as a function of the
damping parameter $\gamma$ (solid line). There, we also show the
logarithmic negativity $E_{n}(\Lambda_{\mathrm{ad}}^{C}[\proj{\phi^{+}}^{AC}])$
for the maximally entangled state (dashed line).

More interestingly, however, it turns out that for the logarithmic
negativity maximally entangled states can be outperformed even by
states with arbitrarily little entanglement, which we prove in the
following theorem. 
\begin{thm}
\label{thm:negativity}For any $\varepsilon>0$ there exists a state
$\rho_{\varepsilon}^{AC}$ with logarithmic negativity at most $\varepsilon$
and a channel $\Lambda_{\varepsilon}^{C}$ such that 
\begin{equation}
E_{n}(\Lambda_{\varepsilon}^{C}[\rho_{\varepsilon}^{AC}])>E_{n}(\Lambda_{\varepsilon}^{C}[\proj{\phi^{+}}^{AC}]).
\end{equation}
\end{thm}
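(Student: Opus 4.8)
The plan is to build the example directly from the amplitude damping analysis developed just above, exploiting the single observation that the lower boundary $\tilde{\alpha}=(1-\gamma)/2$ of the advantageous region can be pushed arbitrarily close to zero by letting the damping parameter $\gamma$ approach one. Concretely, I would take the input state to be $\rho_{\varepsilon}^{AC}=\proj{\alpha}^{AC}$ of Eq.~(\ref{eq:alpha}) and the channel $\Lambda_{\varepsilon}^{C}$ to be the amplitude damping channel $\Lambda_{\mathrm{ad}}^{C}$ of Eq.~(\ref{eq:amplitude damping}), with a value of $\gamma$ to be fixed below. The whole point is that the smallness of the input entanglement and the advantage condition of Eq.~(\ref{eq:advantage}) are controlled by two different parameters, so they can be arranged to hold simultaneously.

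First I would control the initial entanglement. A short computation of the partial transpose of $\proj{\alpha}^{AC}$ (whose nontrivial block has eigenvalues $\pm\sqrt{\alpha(1-\alpha)}$) gives the logarithmic negativity of the input state,
\begin{equation}
E_{n}(\proj{\alpha}^{AC})=\log_{2}\!\left(1+2\sqrt{\alpha(1-\alpha)}\right),
\end{equation}
which is continuous in $\alpha$ and tends to zero as $\alpha\to 0$. Hence for any prescribed $\varepsilon>0$ I can fix some $\alpha>0$ small enough that this quantity is at most $\varepsilon$; such an $\alpha$ in particular satisfies $\alpha<1/2$.

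Next I would place this fixed $\alpha$ inside the advantageous window. Since $\alpha>0$, the condition $(1-\gamma)/2<\alpha$ is equivalent to $\gamma>1-2\alpha$, and because $1-2\alpha<1$ there exists $\gamma\in(1-2\alpha,1)$. For such a $\gamma$ one has $(1-\gamma)/2<\alpha<1/2$ with $0<\gamma<1$, which is exactly the regime in which Eq.~(\ref{eq:advantage}) holds. Setting $\Lambda_{\varepsilon}^{C}=\Lambda_{\mathrm{ad}}^{C}$ with this $\gamma$, Eq.~(\ref{eq:advantage}) then delivers $E_{n}(\Lambda_{\varepsilon}^{C}[\rho_{\varepsilon}^{AC}])>E_{n}(\Lambda_{\varepsilon}^{C}[\proj{\phi^{+}}^{AC}])$, which is the assertion.

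I do not expect a genuine obstacle, since every inequality has already been established in Section~\ref{sub:negativity}; the only delicate point is the order of the two parameter choices, namely that $\alpha$ must be fixed first (to cap the input entanglement) and $\gamma$ chosen afterwards in terms of it. The natural worry is that the gap in Eq.~(\ref{eq:advantage}) might degrade to zero as $\alpha$ shrinks, but the statement asks only for a strict inequality rather than a quantitative separation, so it is enough that $\alpha$ sits strictly above the collapsing threshold $(1-\gamma)/2$. This decoupling of the two roles of $\alpha$ and $\gamma$ is the structural fact that makes the construction work.
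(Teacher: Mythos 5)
Your proposal is correct and follows essentially the same route as the paper's own proof: the same state $\ket{\alpha}^{AC}$ and amplitude damping channel, the same ordering of choices (fix $\alpha$ small enough to cap the input logarithmic negativity at $\varepsilon$, then pick $\gamma\in(1-2\alpha,1)$ so that $(1-\gamma)/2<\alpha<1/2$), and the same appeal to Eq.~(\ref{eq:advantage}) to conclude. The only cosmetic difference is that you compute $E_{n}(\proj{\alpha}^{AC})=\log_{2}\bigl(1+2\sqrt{\alpha(1-\alpha)}\bigr)$ explicitly, whereas the paper invokes continuity between the separable ($\alpha=0$) and maximally entangled ($\alpha=1/2$) endpoints.
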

\begin{proof}
We show this result for the amplitude damping channel $\Lambda_{\mathrm{ad}}^{C}$
given in Eq. (\ref{eq:amplitude damping}) and the pure state $\ket{\alpha}^{AC}$
in Eq. (\ref{eq:alpha}). From the fact that the state $\ket{\alpha}^{AC}$
is separable for $\alpha=0$ and maximally entangled for $\alpha=1/2$,
it follows that for any $\varepsilon>0$ there exists $\alpha_{\varepsilon}\in(0,1/2)$
such that the logarithmic negativity of the state $\ket{\alpha_{\varepsilon}}^{AC}$
is nonzero and at most $\varepsilon$, i.e., 
\begin{equation}
0<E_{n}(\ket{\alpha_{\varepsilon}}^{AC})\leq\varepsilon.
\end{equation}
To complete the proof it is enough to show that for any $\varepsilon>0$
there exists an amplitude damping channel $\Lambda_{\mathrm{ad}}^{C}$
with the damping parameter $\gamma_{\varepsilon}$ such that 
\begin{equation}
E_{n}(\Lambda_{\mathrm{ad}}^{C}[\proj{\alpha_{\varepsilon}}^{AC}])>E_{n}(\Lambda_{\mathrm{ad}}^{C}[\proj{\phi^{+}}^{AC}].\label{eq:alphaepsilon}
\end{equation}
The existence of such a channel follows directly from the arguments
presented above. Precisely, by virtue of the inequality (\ref{eq:advantage})
we immediately see that Eq. (\ref{eq:alphaepsilon}) is true for any
damping parameter $\gamma_{\varepsilon}$ chosen such that $1-2\alpha_{\varepsilon}<\gamma_{\varepsilon}<1$. 
\end{proof}
We have then shown that, in some scenarios, states with very little
entanglement are a better resource for noisy entanglement distribution
when compared to maximally entangled states if the logarithmic negativity
$E_{n}$ is used to quantify entanglement. It is worth mentioning
that this entanglement measure is closely related to the PPT entanglement
cost, i.e., the entanglement cost under quantum operations preserving
the positivity of the partial transpose. Precisely, $E_{n}$ is always
a lower bound on the PPT entanglement cost \cite{Audenaert2003},
while for all two-qubit states both quantities coincide \cite{Ishizaka2004}.
For this reason the logarithmic negativity is equivalent to the PPT
entanglement cost within the framework presented in this section,
and all statements made for one quantity are also valid for the other.

Moreover, we point out that the result presented in Theorem \ref{thm:negativity}
can also be extended to the multi-copy scenario, where Alice and Bob
have access to many copies of a quantum channel $\Lambda^{C}$. The
aim of the process in this case is to distribute the maximal logarithmic
negativity per copy of the channel. The aforementioned results together
with additivity of the logarithmic negativity \cite{Vidal2002} imply
that for amplitude damping noise maximally entangled states can be
outperformed by states with arbitrary little entanglement also in
this scenario.

Let us finally mention that in Ref. \cite{Pal} the authors show that
the maximally entangled states are optimal for entanglement distribution
if the single-qubit channel used to transmit the particle is unital
and negativity is used as the entanglement measure.

\section{\label{sec:Advantage}Optimal entanglement distribution with preshared
correlations}

In the foregoing discussion on optimal entanglement distribution we
assumed that initially Alice and Bob do not share any correlations,
i.e., the initial state is fully product, see Eq. (\ref{eq:initial}).
Here, we will relax this assumption and allow for more general initial
quantum states 
\begin{equation}
\rho_{i}=\rho^{ABC}.
\end{equation}
The main question we want answer in this section can be formulated
as follows: \emph{Are preshared correlations useful for entanglement
distribution?}

%As we will see in the following,
The answer to this question is negative for any convex entanglement
measure $E$ if Alice and Bob initially share a separable state: 
\begin{equation}
\rho_{i}=\sum_{k}p_{k}\rho_{k}^{AC}\otimes\rho_{k}^{B}.\label{eq:separable}
\end{equation}
In this case the initial entanglement is zero, and the amount of distributed
entanglement is thus given by $E^{A|BC}(\rho_{f})=E^{A|BC}(\sum_{k}p_{k}\Lambda^{C}[\rho_{k}^{AC}]\otimes\rho_{k}^{B})$.
For any convex entanglement quantifier $E$ these arguments imply
that any separable initial state $\rho_{i}$ given in Eq. (\ref{eq:separable})
can be outperformed by some pure state $\ket{\psi}^{AC}$: 
\begin{equation}
E^{A|C}(\Lambda^{C}[\proj{\psi}^{AC}])\geq E^{A|BC}(\Lambda^{C}[\rho_{i}]).
\end{equation}
This proves that preshared correlations are not useful for entanglement
distribution for any convex entanglement measure if the preshared
state is separable.

\subsection{\label{sub:subadditive}Subadditive entanglement measures}

We will now consider subadditive entanglement measures, which are
the ones that satisfy the following inequality 
\begin{equation}
E^{A_{1}A_{2}|B_{1}B_{2}}(\rho^{A_{1}B_{1}}\otimes\sigma^{A_{2}B_{2}})\leq E^{A_{1}|B_{1}}(\rho^{A_{1}B_{1}})+E^{A_{2}|B_{2}}(\sigma^{A_{2}B_{2}})\label{eq:subadditive}
\end{equation}
for any two states $\rho^{A_{1}B_{1}}$ and $\sigma^{A_{2}B_{2}}$.
Well known examples of such measures are the entanglement of formation,
the relative entropy of entanglement, and the logarithmic negativity,
which in fact is additive, i.e., it satisfies Eq. (\ref{eq:subadditive})
with equality. Moreover, we will also consider single-qubit Pauli
channels $\Lambda_{\mathrm{p}}^{C}$ which were already introduced
in Sec. \ref{sub:Pauli}. As it is proven in the following theorem,
for this type of noise preshared correlations are not useful if the
corresponding entanglement quantifier is subadditive. 
\begin{thm}
\label{thm:subadditive}Given a single-qubit Pauli channel $\Lambda_{\mathrm{p}}^{C}$
and two states $\rho_{i}=\rho^{ABC}$ and $\rho_{f}=\Lambda_{\mathrm{p}}^{C}[\rho_{i}]$,
the following inequality holds for any subadditive entanglement measure
$E$: 
\begin{equation}
E^{A|C}(\Lambda_{\mathrm{p}}^{C}[\proj{\phi^{+}}^{AC}])\geq E^{A|BC}(\rho_{f})-E^{AC|B}(\rho_{i}).
\end{equation}
\end{thm}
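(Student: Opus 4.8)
The plan is to recast the inequality into the equivalent additive form
\[
E^{A|BC}(\rho_f)\le E^{AC|B}(\rho_i)+E^{A|C}(\Lambda_{\mathrm{p}}^{C}[\proj{\phi^+}^{AC}]),
\]
and to read the two terms on the right as the entanglement contents of two independent resources: the preshared state $\rho_i$ (across the cut $AC|B$) and the channel state obtained by sending one half of a maximally entangled pair through the Pauli channel. The strategy is to exhibit an LOCC protocol that builds $\rho_f$ out of exactly these two resources; LOCC monotonicity then caps the entanglement of $\rho_f$ by the joint entanglement of the resources, and subadditivity splits that joint entanglement into the two advertised terms.

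Concretely, I would adjoin two auxiliary qubits $\tilde{R}$ and $C'$ and prepare the resource $\omega=\Lambda_{\mathrm{p}}^{C'}[\proj{\phi^+}^{\tilde{R}C'}]$, with $\tilde{R}$ held by Alice and $C'$ by Bob. Reusing the teleportation-based channel simulation from the proof of Theorem~\ref{thm:Pauli} (Fig.~\ref{fig:teleportation}), but now teleporting the qubit $C$ of $\rho_i$ with $A$ and $B$ as passive bystanders, a Bell measurement on $C\tilde{R}$ by Alice followed by a classically conditioned Pauli correction on $C'$ by Bob produces exactly $\Lambda_{\mathrm{p}}^{C'}[\rho^{ABC'}]$, i.e.\ $\rho_f$ after relabeling $C'\to C$, now split across the cut $A|BC'$. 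The covariance relation~(\ref{eq:commutation}) for Pauli channels is precisely what makes this output independent of the measurement outcome, so this key physical ingredient is already in hand.

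With the protocol fixed I would track the bipartite cut between the two fixed parties. Initially Alice holds $AC\tilde{R}$ and Bob holds $BC'$, so the input is the product $\rho_i\otimes\omega$ across the cut $AC\tilde{R}\,|\,BC'$. Subadditivity~(\ref{eq:subadditive}) gives $E^{AC\tilde{R}|BC'}(\rho_i\otimes\omega)\le E^{AC|B}(\rho_i)+E^{\tilde{R}|C'}(\omega)$, while monotonicity of $E$ under the LOCC protocol gives $E^{A|BC'}(\rho_f)\le E^{AC\tilde{R}|BC'}(\rho_i\otimes\omega)$. Invariance of $E$ under relabeling identifies $E^{\tilde{R}|C'}(\omega)$ with $E^{A|C}(\Lambda_{\mathrm{p}}^{C}[\proj{\phi^+}^{AC}])$, and chaining these relations and rearranging yields the claim.

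The step requiring the most care is the bookkeeping of system ownership: subadditivity must be invoked across the correct $AC\tilde{R}\,|\,BC'$ cut, and LOCC monotonicity must compare entanglement between the two fixed parties (not between fixed systems) before and after $C,\tilde{R}$ are consumed in the Bell measurement. The genuine physics---simulating a Pauli channel by teleportation through its covariance---is borrowed wholesale from Theorem~\ref{thm:Pauli}; the only real obstacle is verifying that the bystander systems $A$ and $B$ are truly left untouched, so that the protocol outputs $\rho_f$ and not some other channel-processed state.
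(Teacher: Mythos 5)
Your proposal is correct and essentially identical to the paper's own proof: both simulate the Pauli channel by teleporting Alice's qubit through the preestablished resource $\Lambda_{\mathrm{p}}[\proj{\phi^{+}}]$ (relying on the covariance relation from Theorem \ref{thm:Pauli}), then invoke LOCC monotonicity across the Alice--Bob cut and subadditivity on the tensor product of $\rho_{i}$ with the resource state. The only difference is notational bookkeeping --- the paper relabels Alice's qubit $C$ as $R$ at the outset, whereas you introduce a primed copy $C'$ on Bob's side and relabel at the end.
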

\begin{proof}
The proof goes along the same lines as the one of Theorem \ref{thm:Pauli}.
We denote the initial state by $\rho_{i}=\rho^{ABR}$, where Alice
is in possession of the particle $A$ and a qubit $R$, and Bob is
in possession of the remaining particle $B$. Additionally, Alice
and Bob have access to the qubits $\tilde{R}$ and $C$ of the state
$\Lambda_{\mathrm{p}}^{C}[\proj{\phi^{+}}^{\tilde{R}C}]$. By applying
the standard teleportation protocol \cite{Bennett1993} to teleport
the qubit $R$ from Alice to Bob and using the state $\Lambda_{\mathrm{p}}^{C}[\proj{\phi^{+}}^{\tilde{R}C}]$
as the resource, we see that Alice and Bob end up in the final state
$\rho_{f}=\Lambda_{\mathrm{p}}^{C}[\rho^{ABC}]$. Using the fact that
all steps in the standard teleportation protocol can be performed
by LOCC and that the amount of entanglement cannot increase in this
process, it follows that the final entanglement $E^{A|BC}(\rho_{f})$
is bounded from above by the amount of entanglement in the total initial
state: $E^{A|BC}(\rho_{f})\leq E^{AR\tilde{R}|BC}(\rho^{ABR}\otimes\Lambda_{\mathrm{p}}^{C}[\proj{\phi^{+}}^{\tilde{R}C}])$.
Finally, for a subadditive entanglement quantifier we can apply Eq.
(\ref{eq:subadditive}), which gives us $E^{A|BC}(\rho_{f})\leq E^{AR|B}(\rho^{ABR})+E^{\tilde{R}|C}(\Lambda_{\mathrm{p}}^{C}[\proj{\phi^{+}}^{\tilde{R}C}])$.
To complete the proof it is enough to notice that the state $\rho^{ABR}$
is equivalent to the initial state $\rho_{i}$. 
\end{proof}
From the theorem it also follows that sending one half of a maximally
entangled state down the noisy Pauli channel is the optimal strategy.
Let us notice that similarly as in Sec. \ref{sub:Pauli} the above
result can further be generalized to the scenario in which the exchanged
particle $C$ consists of $n$ qubits, and the channel $\Lambda_{\mathrm{p}}^{C}$
is a tensor product of $n$ (possibly different) single-qubit Pauli
channels. For any subadditive entanglement quantifier preshared correlations
do not provide any advantage also in this scenario, and the best performance
is achieved for the maximally entangled state.

Note that these arguments also cover the situation where the channel
used for entanglement distribution is noiseless. On the other hand,
if the measure of entanglement is \emph{not} subadditive, preshared
correlations can indeed be helpful even in the noiseless scenario.
This can be seen by considering the second power of the entanglement
of formation: $E=E_{f}^{2}$. Note that $E$ is a proper entanglement
quantifier, i.e., it is nonincreasing under LOCC and zero only on
separable states. If Alice and Bob have access to a noiseless single-qubit
channel and do not share any initial correlations, the optimal strategy
for Alice is to prepare locally two qubits $A$ and $C$ in the maximally
entangled state $\ket{\phi^{+}}^{AC}$, and to send the qubit $C$
to Bob. The amount of entanglement distributed in this way is given
by $E^{A|C}(\ket{\phi^{+}}^{AC})=1$. However, Alice and Bob can achieve
a better performance if they initially share the state $\ket{\psi}=\ket{\psi}^{ABC}=(\ket{000}+\ket{101}+\ket{210}+\ket{311})/2$.
In this case the amount of distributed entanglement is given by $E^{A|BC}(\ket{\psi})-E^{AC|B}(\ket{\psi})=3$.

\subsection{\label{sub:distillable-entanglement}Distillable entanglement}

The results presented so far can also be extended to the distillable
entanglement $E_{d}$ which was conjectured to be superadditive in
\cite{Shor2001}, i.e., it violates the inequality (\ref{eq:subadditive})
for some states. Based on this conjecture we will now show that preshared
correlations can be useful for the distribution of distillable entanglement.
In particular, we will consider \emph{entanglement binding channels}
$\Lambda_{\mathrm{eb}}^{C}$, i.e., channels that destroy the distillable
entanglement in any initial state $\rho^{AC}$: $E_{d}^{A|C}(\Lambda_{\mathrm{eb}}^{C}[\rho^{AC}])=0$.
This implies that this type of channels cannot be used for the distribution
of distillable entanglement if Alice and Bob do not share any correlations
initially. However, provided the superadditivity conjecture is true,
one can show that entanglement binding channels can still be used
for entanglement distribution if preshared correlations are available. 
\begin{conjecture}
There exists a state $\rho_{i}=\rho^{ABC}$ and an entanglement binding
channel $\Lambda_{\mathrm{eb}}^{C}$ for which the following inequality
holds 
\begin{equation}
E_{d}^{A|BC}(\Lambda_{\mathrm{eb}}^{C}[\rho_{i}])>E_{d}^{AC|B}(\rho_{i}).
\end{equation}

\end{conjecture}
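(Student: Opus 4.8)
The plan is to reduce the statement to the (conjectured) strict superadditivity of the distillable entanglement, realised as an \emph{activation}: the bound entanglement that an entanglement binding channel necessarily produces is combined, after transmission, with a free entangled state that Alice and Bob preshare. The essential point is that the state obtained by passing one half of $\ket{\phi^+}$ through $\Lambda_{\mathrm{eb}}^C$ -- its Choi--Jamio\l{}kowski state -- is nondistillable by the very definition of a binding channel, yet its presence can lift the distillable entanglement of the preshared free state above its original value.

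First I would enlarge Alice's system as $A\to AA'$ and choose the preshared state to be the product
\begin{equation}
\rho_i=\rho_1^{AB}\otimes\proj{\phi^+}^{A'C},
\end{equation}
where $\rho_1^{AB}$ is a free entangled state held across the $A|B$ cut and $\ket{\phi^+}^{A'C}$ is a maximally entangled pair that Alice prepares locally between an ancilla $A'$ and the travelling particle $C$. Sending $C$ through the channel yields the product final state
\begin{equation}
\rho_f=\Lambda_{\mathrm{eb}}^C[\rho_i]=\rho_1^{AB}\otimes\sigma^{A'C},\qquad\sigma^{A'C}=\Lambda_{\mathrm{eb}}^C[\proj{\phi^+}^{A'C}],
\end{equation}
so that $\sigma$ is exactly the Choi state of $\Lambda_{\mathrm{eb}}^C$ and satisfies $E_d^{A'|C}(\sigma)=0$. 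The bookkeeping on the two relevant cuts is then immediate. Initially Alice holds $A,A',C$ and Bob holds $B$; since $\proj{\phi^+}^{A'C}$ is entirely local to Alice it drops out and $E_d^{AC|B}(\rho_i)=E_d(\rho_1)$. Finally Alice holds $A,A'$ and Bob holds $B,C$, so the two tensor factors are aligned with the same Alice/Bob partition and $E_d^{A|BC}(\rho_f)=E_d(\rho_1\otimes\sigma)$ in the ordinary bipartite sense (identifying the enlarged $AA'$ with $A$).

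With this reduction the claimed inequality $E_d^{A|BC}(\rho_f)>E_d^{AC|B}(\rho_i)$ becomes exactly
\begin{equation}
E_d(\rho_1\otimes\sigma)>E_d(\rho_1)=E_d(\rho_1)+E_d(\sigma),
\end{equation}
i.e.\ a strict violation of subadditivity of $E_d$ with one factor nondistillable. This is precisely what the superadditivity conjecture of \cite{Shor2001} supplies: it guarantees states $\rho_1$ and $\sigma$ (the latter a bound entangled state associated with a binding channel) for which $E_d(\rho_1\otimes\sigma)$ strictly exceeds $E_d(\rho_1)+E_d(\sigma)$. Taking $\Lambda_{\mathrm{eb}}^C$ to be the channel whose Choi state is this $\sigma$ -- legitimate because the relevant Werner states have maximally mixed marginals and are therefore valid Choi states of trace-preserving maps -- then identifies the remaining gap with the binding property of that channel.

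The delicate -- and genuinely conjectural -- step is to guarantee that this $\Lambda_{\mathrm{eb}}^C$ is really binding, namely that $E_d^{A|C}(\Lambda_{\mathrm{eb}}^C[\rho^{AC}])=0$ holds for \emph{every} input $\rho^{AC}$ and not only for the maximally entangled one that fixes $\sigma$. If $\sigma$ could be taken with positive partial transpose this would be automatic: the associated map is then completely copositive, so $[(\mathrm{id}\otimes\Lambda_{\mathrm{eb}}^C)[\rho]]^{T_C}\ge 0$ for every $\rho$ and each output is PPT, hence nondistillable. The known superadditivity conjecture, however, rests on \emph{NPT} bound entangled Werner states, for which both the nondistillability of all channel outputs and the activation itself are tied to the unresolved NPT bound entanglement conjecture. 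Reconciling ``nondistillable on every input'' with ``activating the preshared free state'' for such an NPT state is therefore the main obstacle, and it is exactly why the statement is formulated as a conjecture rather than a theorem: modulo the superadditivity conjecture, the tensor-factor bookkeeping above delivers the result.
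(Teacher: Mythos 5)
Your reduction is, at its core, the same as the paper's: preshare one state, use the channel to establish a second state so that the final state is a tensor product across the Alice--Bob cut, and reduce the claimed inequality to strict superadditivity of $E_{d}$. Where your proposal goes wrong is in the instantiation. First, the conjecture of \cite{Shor2001} does not supply a \emph{free} entangled $\rho_{1}$ and a nondistillable $\sigma$ with $E_{d}(\rho_{1}\otimes\sigma)>E_{d}(\rho_{1})$; what it supplies (conditionally on the undistillability of certain NPT Werner states) is \emph{two bound entangled states} -- one PPT, one NPT Werner -- whose tensor product is distillable. In the correct instantiation $\rho_{1}$ is itself bound entangled, so $E_{d}^{AC|B}(\rho_{i})=E_{d}(\rho_{1})=0$ and all that is needed is $E_{d}(\rho_{1}\otimes\sigma)>0$. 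Demanding that $\rho_{1}$ be free entangled asks for a different activation statement that the conjecture does not provide, and it also contradicts your own later sentence in which the pair $\rho_{1},\sigma$ is said to be ``guaranteed'' by that conjecture.

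Second, and more importantly, the ``main obstacle'' on which your proposal ends is not an obstacle once the two states are given the right roles -- a fix available from an observation you yourself make. Let the \emph{PPT} member of the pair be the state established through the channel, and preshare the NPT Werner state. A channel whose Choi state is PPT is completely copositive, so \emph{every} output is PPT and hence nondistillable: such a channel is binding unconditionally, with no appeal to any conjecture. (If the PPT state's marginal is not maximally mixed, so that it is not literally a Choi state of a trace-preserving map, one invokes instead the construction of \cite{Horodecki2000a} -- as the paper does -- which shows that \emph{any} bound entangled state can be established through some binding entanglement channel.) With this assignment, the only conjectural ingredient in the whole argument is superadditivity itself; the existence and binding property of the channel are theorems. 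Your closing claim that reconciling ``nondistillable on every input'' with activation ``is exactly why the statement is formulated as a conjecture'' therefore mischaracterizes the logic, and as written your argument stops short of the complete conditional proof that the paper gives.
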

In the following we will prove this conjecture, assuming the validity
of the superadditivity conjecture for distillable entanglement. To
this end, we consider two bound entangled states $\rho_{\mathrm{be}}^{X_{1}Y_{1}}$
and $\sigma_{\mathrm{be}}^{X_{2}Y_{2}}$ for which 
\begin{equation}
E_{d}^{X_{1}X_{2}|Y_{1}Y_{2}}(\rho_{\mathrm{be}}^{X_{1}Y_{1}}\otimes\sigma_{\mathrm{be}}^{X_{2}Y_{2}})>0,
\end{equation}
and assume that initially Alice and Bob share one of them, say $\rho_{\mathrm{be}}$.
In the next step, Alice and Bob use an entanglement binding channel
to establish the additional state $\sigma_{\mathrm{be}}$ between
them. The existence of such an entanglement binding channel is assured
by results provided in \cite{Horodecki2000a}. After this procedure
Alice and Bob share the conjectured distillable state $\rho_{\mathrm{be}}\otimes\sigma_{\mathrm{be}}$.
As a consequence, entanglement binding channels can be used for entanglement
distribution in the presence of preshared correlations under the assumption
that the distillable entanglement is superadditive.

\subsection{Distance-based entanglement measures}

\begin{table*}
\begin{centering}
\begin{tabular}{>{\raggedright}p{4cm}|>{\raggedright}p{4cm}|>{\raggedright}p{4cm}|>{\raggedright}p{4cm}|>{\raggedright}p{1cm}}
\textbf{Entanglement measure}  & \textbf{Type of noisy channel}  & \textbf{Optimal states (without preshared correlations)}  & \textbf{Advantage of preshared correlations} & \textbf{Section}\tabularnewline
\hline 
\hline 
Subadditive entanglement measures  & Single-qubit Pauli channel or any combination thereof  & Maximally entangled states  & No advantage & \ref{sub:subadditive}\tabularnewline
\hline 
Entanglement measures which are not subadditive & Noiseless channel  & Maximally entangled states  & Some of these measures show advantage even in the noiseless scenario & \ref{sub:subadditive}\tabularnewline
\hline 
Distillable entanglement  & Entanglement binding channels  & Without preshared correlations no entanglement distribution possible  & Conjectured advantage, based on the superadditivity conjecture of
distillable entanglement & \ref{sub:distillable-entanglement}\tabularnewline
\hline 
Measures which for two qubits reduce to a nondecreasing function of
concurrence  & Single-qubit noise  & Maximally entangled states  & Unknown & \ref{sub:EoF}\tabularnewline
\hline 
Logarithmic negativity  & Single-qubit amplitude damping channel  & Maximally entangled states are not always optimal  & Unknown & \ref{sub:negativity}\tabularnewline
\end{tabular}
\par\end{centering}

\caption{\label{tab:results} Overview over the entanglement quantifiers and
types of noisy channels considered in this paper. The table shows
also the optimal state for entanglement distribution without preshared
correlations for the corresponding entanglement measure and quantum
channel. As can be seen from the fourth column, in some situations
preshared correlations show an advantage for entanglement distribution.
The last column shows the section in this article, where the corresponding
result has been obtained.}
\end{table*}

In the last part of this section we consider distance-based entanglement
quantifiers $E$, as defined in Eq. (\ref{eq:entanglement}). We have
the following (without loss of generality we assume that $d_{A}\geq d_{C}$): 
\begin{thm}
\label{thm:bound}For any noisy channel $\Lambda^{C}$ there exists
a pure state $\ket{\psi}=\ket{\psi}^{AC}$ such that the following
inequality holds for any two states $\rho_{i}=\rho^{ABC}$ and $\rho_{f}=\Lambda^{C}[\rho_{i}]$:
\begin{equation}
\Delta^{C|A}(\Lambda^{C}[\proj{\psi}])\geq E^{A|BC}(\rho_{f})-E^{AC|B}(\rho_{i}).\label{thm:10}
\end{equation}
\end{thm}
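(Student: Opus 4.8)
The plan is to take $\ket{\psi}^{AC}$ to be the pure state that \emph{maximizes} the output discord $\Delta^{C|A}(\Lambda^{C}[\proj{\phi}^{AC}])$ over all pure inputs $\ket{\phi}^{AC}$; this state depends only on the channel $\Lambda^{C}$, not on $\rho_{i}$, and I would argue that it dominates the distributed entanglement for every initial state. The natural starting point is Theorem~\ref{thm:noisy}: applying Eq.~(\ref{eq:theorem4}) to the pair $(\rho_{i},\rho_{f})$ already yields $\Delta^{C|AB}(\rho_{f})\geq E^{A|BC}(\rho_{f})-E^{AC|B}(\rho_{i})$. The whole problem then reduces to showing that the mixed-state, $\rho_{i}$-dependent quantity $\Delta^{C|AB}(\rho_{f})$ never exceeds the \emph{fixed} pure-state output discord $\Delta^{C|A}(\Lambda^{C}[\proj{\psi}^{AC}])$.

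The key technical ingredient I would first isolate is a monotonicity lemma: for any distance-based discord built from a contractive $D$, one has $\Delta^{C|XY}(\omega^{XYC})\geq\Delta^{C|X}(\mathrm{Tr}_{Y}[\omega^{XYC}])$. This follows immediately by taking the measurement $\{\Pi_{i}^{C}\}$ optimal for the left-hand side, noting that tracing out $Y$ is a quantum operation so $D$ cannot increase, and using that $\Pi_{i}^{C}$ commutes with $\mathrm{Tr}_{Y}$, so the traced-out post-measurement state is exactly $\sum_{i}\Pi_{i}^{C}(\mathrm{Tr}_{Y}\omega)\Pi_{i}^{C}$. With this in hand, I would purify the initial state to $\ket{\Psi}^{ABCR}$ with a reference $R$, so that $\rho_{f}=\mathrm{Tr}_{R}[\Lambda^{C}[\proj{\Psi}^{ABCR}]]$ because $\Lambda^{C}$ acts only on $C$ and hence commutes with $\mathrm{Tr}_{R}$. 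The lemma (with $X=AB$, $Y=R$) then gives $\Delta^{C|AB}(\rho_{f})\leq\Delta^{C|ABR}(\Lambda^{C}[\proj{\Psi}^{ABCR}])$, i.e.\ the mixed-state discord is controlled by the output discord of a \emph{pure} input $\ket{\Psi}$ across the cut $C\,|\,ABR$.

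To convert this into a statement about a state on $A\otimes C$, I would invoke the Schmidt decomposition: $\ket{\Psi}$ has Schmidt rank at most $d_{C}$ across the cut $C\,|\,ABR$, so $\Lambda^{C}[\proj{\Psi}^{ABCR}]$ is supported on $C\otimes V$ for a subspace $V$ of the complement with $\dim V\leq d_{C}\leq d_{A}$. Since a measurement on $C$ leaves this support invariant, the discord is unchanged upon restricting to $C\otimes V$ and embedding $V$ into $A$, so it equals $\Delta^{C|A}(\Lambda^{C}[\proj{\psi_{\rho_{i}}}^{AC}])$ for some pure $\ket{\psi_{\rho_{i}}}^{AC}$ (here the hypothesis $d_{A}\geq d_{C}$ guarantees enough room). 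Bounding this by the maximum over all pure inputs closes the chain and identifies the universal $\ket{\psi}$. The main obstacle is precisely this purification-plus-Schmidt step: it is what turns the $\rho_{i}$-dependent, arbitrarily large system $B$ into a single fixed state on $A\otimes C$; I would also invoke the mild standard facts that $\Delta^{C|A}$ is continuous and the pure-state set is compact, so that the maximizer $\ket{\psi}$ actually exists.
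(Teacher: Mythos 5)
Your proposal is correct and follows essentially the same route as the paper's proof: bound the distributed entanglement by $\Delta^{C|AB}(\rho_{f})$ via Theorem~\ref{thm:noisy}, purify $\rho_{i}$ with a reference $R$, use monotonicity of distance-based discord under operations (here, tracing out $R$) on the unmeasured side, and exploit $d_{A}\geq d_{C}$ together with the Schmidt decomposition across $C|ABR$ to replace the purification by a pure state on $A\otimes C$, finally taking the optimal such state. The only difference is one of exposition: you prove inline the monotonicity lemma that the paper cites from the literature, and you spell out the support/isometry-embedding step that the paper leaves implicit.
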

\begin{proof}
From Theorem \ref{thm:noisy} it follows that the amount of distributed
entanglement is bounded above by the amount of discord between the
exchanged particle $C$ and the remaining system $AB$ in the final
state $\rho_{f}=\Lambda^{C}[\rho_{i}]$: 
\begin{equation}
\Delta^{C|AB}(\Lambda^{C}[\rho_{i}])\geq E^{A|BC}(\rho_{f})-E^{AC|B}(\rho_{i}).\label{nierownosc}
\end{equation}
Then, to obtain Eq. (\ref{thm:10}) from Eq. (\ref{nierownosc}) it
suffices to show that for any channel $\Lambda^{C}$ there exists
a pure state $\ket{\psi}=\ket{\psi}^{AC}$ such that the following
inequality 
\begin{equation}
\Delta^{C|A}(\Lambda^{C}[\proj{\psi}])\geq\Delta^{C|AB}(\Lambda^{C}[\rho_{i}])\label{eq:optimal discord}
\end{equation}
holds for any initial state $\rho_{i}=\rho^{ABC}$. For this purpose,
let us first denote by $\ket{\phi}=\ket{\phi}^{ABCR}$ the purification
of $\rho_{i}$, i.e., $\rho_{i}=\mathrm{Tr}_{R}[\proj{\phi}]$. Then,
we recall that all distance-based quantifiers of discord $\Delta^{X|Y}$
do not increase under quantum operations on the subsystem $Y$, if
the corresponding distance does not increase under quantum operations
\cite{Streltsov2011a}. This implies that the inequality $\Delta^{C|ABR}(\Lambda^{C}[\proj{\phi}])\geq\Delta^{C|AB}(\Lambda^{C}[\rho_{i}])$
is satisfied. The proof of Eq. (\ref{eq:optimal discord}) is complete
by recalling that $d_{A}\geq d_{C}$, and thus there must exist a
pure state $\ket{\psi}=\ket{\psi}^{AC}$ such that $\Delta^{C|A}(\Lambda^{C}[\proj{\psi}])\geq\Delta^{C|ABR}(\Lambda^{C}[\proj{\phi}])$
is true for any state $\ket{\phi}=\ket{\phi}^{ABCR}$. 
\end{proof}
Let us now apply the above result to the single-qubit phase damping
channel $\Lambda_{\mathrm{pd}}^{C}$ which is a special case of a
Pauli channel and is defined as follows: 
\begin{equation}
\Lambda_{\mathrm{pd}}^{C}[\rho_{i}]=(1-p)\cdot\rho_{i}+p\cdot\sigma_{z}^{C}\rho_{i}\sigma_{z}^{C}
\end{equation}
with an initial state $\rho_{i}=\rho^{ABC}$ and the damping parameter
$p$ ranging from $0$ to $1/2$. While $p=0$ corresponds to the
noiseless scenario, full phase damping is achieved for $p=1/2$. Using
Theorem \ref{thm:subadditive}, it follows that for this type of noise
maximally entangled states are optimal for entanglement distribution
if the quantifier of entanglement is subadditive. In particular, this
is true for the relative entropy of entanglement $E_{R}$ defined
in Eq. (\ref{eq:REE}). As we will see in the following, for this
entanglement measure the bound provided in Theorem \ref{thm:bound}
turns out to be tight for any single-qubit phase damping channel:
\begin{align}
\Delta_{R}^{C|A}(\Lambda_{\mathrm{pd}}^{C}[\proj{\phi^{+}}^{AC}]) & =E_{R}^{A|C}(\Lambda_{\mathrm{pd}}^{C}[\proj{\phi^{+}}^{AC}])\label{eq:phase damping}\\
 & \geq E_{R}^{A|BC}(\rho_{f})-E_{R}^{AC|B}(\rho_{i}).\nonumber 
\end{align}
Here, $\Delta_{R}$ is the relative entropy of discord defined in
Eq. (\ref{eq:RED}), $\rho_{i}=\rho^{ABC}$ is an arbitrary initial
state with $d_{A}\geq d_{C}=2$, and $\rho_{f}=\Lambda_{\mathrm{pd}}^{C}[\rho_{i}]$
is the final state after the application of the noisy channel.

To prove Eq. (\ref{eq:phase damping}) let us notice that the following
chain of inequalities holds:
\begin{align}
S(\rho^{XY}||\sum_{i}\Pi_{i}^{X}\rho^{XY}\Pi_{i}^{X}) & \geq\Delta_{R}^{X|Y}(\rho^{XY})\geq E_{R}^{X|Y}(\rho^{XY})\nonumber \\
 & \geq S(\rho^{X})-S(\rho^{XY}),\label{eq:bound-1}
\end{align}
where $\{\Pi_{i}^{X}\}$ is a local von Neumann measurement on the
particle $X$, and the last inequality was proven in \cite{Plenio2000}.
If we now choose the projectors $\Pi_{i}^{C}=\proj{i}^{C}$, it can
be verified that for the state $\sigma^{AC}=\Lambda_{\mathrm{pd}}^{C}[\proj{\phi^{+}}^{AC}]$
the upper and the lower bound in Eq. (\ref{eq:bound-1}) coincide:
$S(\sigma^{AC}||\sum_{i}\Pi_{i}^{C}\sigma^{AC}\Pi_{i}^{C})=S(\sigma^{C})-S(\sigma^{AC})$.
Together with Theorem \ref{thm:subadditive} this completes the proof
of Eq. (\ref{eq:phase damping}). In particular, this also shows that
the bound provided in Theorem \ref{thm:bound} is tight for single-qubit
phase damping channels, since for this type of noise the amount of
distributed entanglement is bounded above by $\Delta_{R}^{C|A}(\Lambda_{\mathrm{pd}}^{C}[\proj{\phi^{+}}^{AC}])$,
and this bound is also reachable according to Eq. (\ref{eq:phase damping}).

\section{Conclusions and outlook}

A concise summary of our results is presented in Table \ref{tab:results},
where we list several entanglement quantifiers and types of noisy
channels considered in this work, show the corresponding optimal state
and discuss the advantage of preshared correlations. In two of the
cases it remains unclear if preshared correlations provide an advantage
for entanglement distribution. We leave this question open for future
research.

The results presented in this work can be regarded as a major step
towards a unified approach to entanglement distribution. In particular,
it can be seen from the first row in Table \ref{tab:results} that
preshared correlations do not provide advantage for any subadditive
entanglement quantifier, if entanglement is distributed via a combination
of single-qubit Pauli channels. In this context, it is tempting to
assume that these results extend to arbitrary noisy channels, and
thus preshared correlations do not provide advantage for any subadditive
entanglement measure and any type of noise\emph{.} Sending one half
of a pure entangled state down a noisy channel would then be the optimal
strategy for any subadditive entanglement measure. While we cannot
prove this conjecture in full generality at this point, our results
strongly support this statement. In particular, advantage of preshared
correlations was only found for measures which are not subadditive,
and for distillable entanglement which is conjectured to be superadditive.

Regarding entanglement distribution with separable states, our results
show that this strategy is not reasonable for any subadditive entanglement
measure, if a combination of single-qubit Pauli channels is used for
the process. On the other hand, this result does not rule out the
superiority of separable states for other types of noise. In this
direction we have found, supplementing the results of Ref. \cite{Pal},
that states with arbitrarily little entanglement can outperform maximally
entangled states for amplitude damping noise, if entanglement is quantified
via the logarithmic negativity. These counterintuitive results also
imply that a closer investigation of entanglement distribution with
separable states is necessary, since -- contrary to recent claims
made e.g. in \cite{Kay2012,Fedrizzi2013} -- maximally entangled states
are not necessarily the best resource to benchmark this process.

The results of this paper can also be seen as the first step toward
similar considerations in quantum many body systems. Note that over
the last decade entanglement has proven to be extremely useful to
characterize properties of many body systems and the nature of quantum
phase transitions \cite{Sachdev1999}. For instance, in the ground
states and low energy states of quantum spin models the following
properties hold (see \cite{Augusiak2012,Lewenstein2012} for a review): 
\begin{itemize}
\item The two body reduced density matrix typically exhibits entanglement
for short separations of the spins only, even at criticality; still,
entanglement measures show signatures of quantum phase transitions
\cite{Osterloh2002,Osborne2002}. 
\item One can concentrate entanglement between the chosen two spins by optimized
measurements on the rest of the system, obtaining in this way the
so-called \textit{localizable entanglement} \cite{Verstraete2004};
the corresponding entanglement length diverges when the standard correlation
length diverges, i.e., at standard quantum phase transitions. 
\item For non-critical systems, ground states and low energy states exhibit
area laws: the von Neumann or Rényi entropy of the reduced density
matrix of a block of size $I$ scales as the size of the boundary
of the block, $\partial I$; at criticality logarithmic divergence
occurs frequently \cite{Vidal2003} (see also \cite{Calabrese2009,Eisert2010}
for a review). 
\item Ground states and low energy states can be efficiently described by
matrix product states, or more generally tensor network states (cf.
\cite{Verstraete2006}). 
\item Topological order for gapped systems in 1D and 2D exhibits itself
frequently in the properties of the \textit{entanglement spectrum},
i.e., the spectrum of the logarithm of the reduced density matrix
of a block $I$ \cite{Li2008}, and in the appearance of the \textit{topological
entropy}, i.e., negative constant correction to the area laws in 2D
\cite{Kitaev2006,Levin2006}. 
\end{itemize}
All the above results indicate the importance of few body entanglement
in the low energy physics of many body systems (cf. \cite{Guhne2005,Hofmann2014,Stasinska2014}).
It is to be expected that few body entanglement will also play a role
in characterizing out-of-equilibrium dynamics of quantum many body
systems \cite{Coser}.

Note that the scheme of entanglement distribution discussed in this
paper -- at least in the noiseless case -- can be considered in the
context of \textit{the real transfer of the particle $C$} to Bob,
or as the \textit{change of partition from $AC|B$ to $A|BC$}. In
this sense, one can view our results as a characterization of entanglement
in three-body reduced density matrix in a many-body system. Generalizations
including noisy transfer are possible, for instance by coupling $C$
locally to a reservoir or an ancillary particle. It would also be
interesting to consider the entanglement distribution scheme with
many recipients (Bobs). Finally, asking analogous questions for Bell
nonlocality or steering seems to be a fascinating perspective that
would also lead to a better understanding of these phenomena. 
\begin{acknowledgments}
We thank Dagmar Bruß, Hermann Kampermann, and Marco Piani for discussion,
and an anonymous referee for helpful suggestions. This work was supported
by the Spanish Ministry projects FOQUS and DIQIP CHIST-ERA. We acknowledge
also EU IP SIQS, EU Grant QUIC, ERC AdG OSYRIS, John Templeton Foundation,
Alexander von Humboldt-Foundation for the Feodor Lynen scholarship
for A. S, and Spanish Ministry for the Juan de la Cierva scholarship
for R. A. 
\end{acknowledgments}
\bibliographystyle{apsrev4-1} 
\end{document}